\newtheorem{definition}{Definition}
\newtheorem{proposition}[definition]{Proposition}
\newtheorem{remark}[definition]{Remark}
\newtheorem{theorem}[definition]{Theorem}
\newtheorem{lemma}[definition]{Lemma}
\newtheorem{corollary}[definition]{Corollary}
\pgfplotsset{compat=newest}
\newcommand{\BC}{\mathbb{C}}
\newcommand{\BN}{\mathbb{N}}
\newcommand{\cC}{{\cal C}}
\newcommand{\cE}{{\cal E}}
\newcommand{\cH}{{\cal H}}
\newcommand{\cT}{{\cal T}}
\newcommand{\mr}{\mathrm}
\newcommand{\U}{{\mathrm{U}}}
\newcommand{\BM}{{\mathbb{M}}}
\newcommand{\BU}{{\mathbb{U}}}
\newcommand{\BY}{\mathbb{Y}}
\newcommand{\V}{{\mathscr{V}}}
\newcommand{\VS}{\mathscr{S}}
\newcommand{\VSw}{\mathscr{W}}
\newcommand{\id}{\mathds{1}}
\newcommand{\idm}{\mathrm{id}}
\newcommand{\pr}{\mathbb{P}}
\DeclareMathOperator{\tr}{Tr}
\newcommand{\tp}{{\tt{t}}} 
\newcommand{\ketbra}[2]{|#1\rangle\!\langle #2|}
\let\inner\relax
\NewDocumentCommand\inner{mg}{%
	\ensuremath{\left\langle #1 \middle\vert \IfNoValueTF{#2}{#1}{#2}\right\rangle}%
}
\newcommand{\scp}[3]{\langle #1 \vert #2 \vert #3 \rangle}
\title{The complexity of semidefinite programs for testing $k$-block-positivity}
\author[1,2,3]{Qian Chen\thanks{chenqian.phys@gmail.com}}
\author[4]{Beno\^it Collins\thanks{collins@math.kyoto-u.ac.jp}}
\affil[1]{Mathematics Research Center, School of Science and Engineering, The Chinese University of Hong Kong, Shenzhen, China}
\affil[2]{Universit\'e de Lyon, Inria, ENS de Lyon, UCBL, LIP, France}
\affil[3]{QICI Quantum Information and Computation Initiative, School of Computing and Data Science, The University of Hong Kong, Pokfulam Road, Hong Kong}
\affil[4]{Kyoto University, Mathematics department, Kitashirakawa Oiwake-cho, 606-8502, Kyoto, Japan}
\date{}
\begin{document}

\maketitle

\begin{abstract}
We extend \cite{chen2025srkbp} by analyzing the complexity of the $k$-block-positivity testing algorithm that stems from the optimization problem in Definition \ref{definition:SDP-k-block-positivity}. In this paper, we investigate a symmetry reduction scheme based on rectangular shaped Young diagrams. Connecting the complexity to the dimensions of irreducible representations of $\U(d)$, we derive an explicit formula for the complexity, which also clarifies why the semidefinite program hierarchy collapses in the $k=d$ case.
\end{abstract}

\section*{Introduction}

A fundamental task in quantum information theory is to characterize linear maps. This involves determining not only whether a map is positive, but also its degree of positivity, that is, for which values of `$k$' it is $k$-positive. The Choi operator of a $k$-positive map is $k$-block-positive, meaning that the operator yields nonnegative expectation value for all pure states with at most Schmidt rank $k$. It closely connects to the notions of bound entanglement, entanglement cost \cite{Hayden:2000zqp,Hayashi:2016gly,Yamasaki:2024nbc} and distillability of entanglement, e.g., the 2-copy distillability conjecture \cite{PRXQuantum.3.010101}. The set of $k$-block-positive operators is the dual of the set consisting of Schmidt number $k$ states \cite{PhysRevA.61.040301,PhysRevA.63.050301,johnston2010NQitI,johnston2010NQitII}.

\medskip

Examining whether a operator is $k$-block-positive or not is a very challenging task.
We extend our work \cite{chen2025srkbp} by analyzing the complexity of the algorithm for testing $k$-block-positivity within the framework of semidefinite programs (SDPs). This algorithm converts the problem from testing $k$-block-positivity to testing $1$-block-positivity via $k$-purification, and then uses extendibility hierarchy to approximate its dual, the set of separable states.

Subsequently, we adopt a scheme that reduces some of the complexity from choosing Young diagrams in \cite{chen2025srkbp}.
The complexity associated with Young diagrams originates from the symmetry reduction under $\U(k)$-symmetry, which defines feasible states as positive semidefinite block matrices indexed by Young diagrams. As the extendibility hierarchy progresses to higher levels, the number of Young diagrams grows, thereby increasing the computational resources required for optimization, since the minimal value must be identified across all blocks. Thus, restricting the set of Young diagrams while maintaining the validity of the test will greatly simplify the algorithm.

Eventually, we study the complexity of this algorithm within the framework of semidefinite programs. Regarding the required SDP variables as computational resources, we derive a formula characterizing the size of this resource, which also provides a quantitative explanation of the phenomenon whereby the extendibility hierarchy should collapse in the $k=d$ case.

\medskip

We will show that establishing hierarchical SDPs on rectangular shape of Young diagrams is sufficient for testing $k$-block-positivity. Then we are led to the main result of this paper.
\begin{theorem}[The SDP complexity of rectangular scheme] \label{thm:main-SDPComplexity}
Setting $N$ by $N+k-1=k n$ with integer $n$. The SDP complexity of the reduced SDP indexed by a rectangular Young diagram $(n^{k})$ is
\begin{align}
\cC_{(n^{k})}
=d \frac{k(d+n-1)}{k+n-1} \prod_{r=1}^{k} \frac{(d+n-r-1)!(k-r)!}{(k+n-r-1)!(d-r)!}.
\label{eq:thm:main-SDPComplexity}
\end{align}
That is to say, the SDP complexity is $O(n^{k(d-k)})$ after symmetry reduction, compared with the unreduced one $O((kd)^{N+1})$.
\end{theorem}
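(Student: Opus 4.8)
The plan is to reduce the claimed identity to a single dimension count, recognise that count as the dimension of an explicit irreducible representation of $\U(d)$, and then evaluate it with the Weyl dimension formula; the asymptotic statement then drops out by counting factors.

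\emph{Step 1: from the SDP to a multiplicity space.} First I would invoke the $\U(k)$-symmetry reduction of the preceding section: after averaging the feasible set over the $\U(k)$-symmetry of the $k$-purified test operator, the single positive semidefinite variable of the level-$N$ SDP splits into independent blocks indexed by the $\U(k)$-irreducible types occurring in the $k$-purified, $N$-fold Bose-symmetrically extended space, the block attached to a Young diagram $\lambda$ being a positive semidefinite operator on the associated multiplicity space $\cW_\lambda$. Hence $\cC_{(n^k)}=\dim\cW_{(n^k)}$, and the hypothesis $N+k-1=kn$ is exactly the bookkeeping condition that makes $(n^k)$ the rectangular diagram occurring at level $N$. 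So the whole statement comes down to computing $\dim\cW_{(n^k)}$.

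\emph{Step 2: identifying the multiplicity space.} The ambient space of the extension is $\cH_A\otimes\big(\text{Bose-symmetric }N\text{-extension of }\cH_B\otimes\BC^{k}\big)$. Applying the $\mathrm{GL}_d$--$\mathrm{GL}_k$ (Howe) duality to $\mathrm{Sym}^{N}(\cH_B\otimes\BC^k)$ and combining it with the Schur--Weyl bookkeeping of the purification ancillas, the $(n^k)$-isotypic component should work out to $\cH_A$ tensored with a single $\U(d)$-irreducible, namely $V_\mu^{\U(d)}$ for the near-rectangular highest weight $\mu=(n,\underbrace{n-1,\dots,n-1}_{k-1},0,\dots,0)$, a partition of $N=k(n-1)+1$ with at most $d$ parts; equivalently $V_\mu^{\U(d)}$ is the Pieri summand obtained by adjoining one box to the first row of the genuinely rectangular $V_{((n-1)^k)}^{\U(d)}$, which is what produces the prefactor $\tfrac{k(d+n-1)}{k+n-1}$ relative to $\dim V_{((n-1)^k)}^{\U(d)}$. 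This would give
\begin{align}
\cC_{(n^k)}=\dim\cH_A\cdot\dim V_\mu^{\U(d)}=d\cdot\dim V_{(n,(n-1)^{k-1})}^{\U(d)}.
\label{eq:plan-reduction}
\end{align}
I expect this step — faithfully propagating the representation theory through the $k$-purification and the Bose-symmetric extension, and thereby pinning down $\mu$ — to be the main obstacle; everything after it is a finite computation.

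\emph{Step 3: evaluation and asymptotics.} Finally I would apply Weyl's formula $\dim V_\mu^{\U(d)}=\prod_{1\le i<j\le d}\frac{\mu_i-\mu_j+j-i}{j-i}$ and split the product according to whether $i,j$ lie in $\{1,\dots,k\}$ or in $\{k+1,\dots,d\}$: pairs with both indices larger than $k$ contribute $1$; pairs with both indices at most $k$ telescope to the single factor $k$ coming from $i=1$; and the $k(d-k)$ mixed pairs $i\le k<j$ reassemble, after collecting factorials, into $\frac{d+n-1}{k+n-1}\prod_{r=1}^{k}\frac{(d+n-r-1)!\,(k-r)!}{(k+n-r-1)!\,(d-r)!}$. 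Multiplying by $\dim\cH_A=d$ yields \eqref{eq:thm:main-SDPComplexity}. For the asymptotics, the only factors of the Weyl product that grow with $n$ are those $k(d-k)$ mixed ones, each of order $n$, so $\cC_{(n^k)}=\Theta(n^{k(d-k)})$; in particular at $k=d$ there are no mixed pairs, the product collapses to the constant $d$, and $\cC_{(n^d)}=d^{2}$ independently of $n$, which is the quantitative reason the extendibility hierarchy collapses in the $k=d$ case. For the comparison with the unreduced SDP one just notes that before the symmetry reduction the positive semidefinite variable lives on the full $(N+1)$-partite purified space $\cH_A\otimes(\cH_B\otimes\BC^{k})^{\otimes N}$, of dimension $d\,(dk)^{N}=O\big((kd)^{N+1}\big)$.
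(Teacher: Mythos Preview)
Your proposal is correct and follows essentially the paper's route: reduce $\cC_{(n^k)}$ to $d\cdot\dim\BU^d_\mu$ for the unique near-rectangular $\mu=(n,(n-1)^{k-1})$ (the only $\mu\searrow\lambda^-$ with $\mu\subset\lambda$ when $\lambda=(n^k)$), and then evaluate by a standard dimension formula. The only differences are packaging: you reach the multiplicity space via $\mathrm{GL}_d$--$\mathrm{GL}_k$ Howe duality on $\mathrm{Sym}^N(\BC^d\otimes\BC^k)$ whereas the paper arrives at the same identification through exchangeable Kraus operators and Schur--Weyl duality (its Proposition on $\cC_\lambda=d\sum_{\mu\subset\lambda:\mu\searrow\lambda^-}\dim\BU^d_\mu$), and you evaluate with the Weyl product split into $i,j\le k$, $i,j>k$, and mixed pairs while the paper applies the hook--content formula directly --- both computations are equivalent, and your explicit count of the $k(d-k)$ growing factors is exactly how the $O(n^{k(d-k)})$ asymptotic and the $k=d$ collapse are read off.
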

The proof of Theorem~\ref{thm:main-SDPComplexity} guides the structure of this paper as follows.
\begin{itemize}
\item Step 0: Section~\ref{sec:KBP-Optimal} presents the basic notation for $k$-block-positivity, as well as related optimization problems for testing $k$-block-positivity.
\item Step 1: Section~\ref{sec:KBP-SDP} introduces the involved SDPs. In Subsection~\ref{subsec:KBP-SDP-reduced}, we review the reduced SDPs indexed by Young diagrams, and present the questions about (a) economical scheme on Young diagrams and (b) hierarchy collapse if $k=d$. In Subsection~\ref{subsec:RelaxTraceConstraint} we change the trace constraint from equality to inequality, and prepare the construction displayed in Corollary~\ref{cor:mu-PermInvVec} for the sequential analysis.

\item Step 2: Section~\ref{sec:SDP:RectangularYoungDiagrams} is dedicated to answer the question about economical scheme on Young diagrams, showing that testing $k$-block-positivity can rely on hierarchical SDPs indexed by rectangular shape Young diagrams, as stated in Proposition~\ref{pro:SDPs-rectangularYoung}.

\item Step 3: Section~\ref{sec:ComplexitySDP} aims to complete the proof of Theorem~\ref{thm:main-SDPComplexity}. Subsection~\ref{subsec:Complexity-YoungDiagram} provides the relation between the size of SDP variables (indexed by one Young diagram) and the dimension of the irreducible representation of $\U(d)$, as stated in Proposition~\ref{proposition:SDPComplexity-lambda}. We then derive the explicit formula in Theorem~\ref{thm:main-SDPComplexity}, which is followed by two corollaries: one explains why the SDP hierarchy collapses if $k=d$, another shows the equivalent complexity for testing $k$- and $k'$-block positivity if $k+k'=d$.
\end{itemize}



 
\section{Preliminary} \label{sec:KBP-Optimal}

\subsection{Notation}

Let $\mathbb{M}_{m \times n}(\BC)$ denote the space of $m \times n$ matrices over $\BC$, and simply $\BM_{m}(\BC)$ when $m=n$; $\mathrm{Herm}(\mathbb{C}^{d})$ and $\mathrm{Herm}(\mathbb{C}^{d})_{+}$ the subsets of $\mathbb{M}_{d}(\BC)$ of hermitian matrices and positive semidefinite matrices, respectively. Let $\BC^{d_{A}}$ and $\BC^{d_{B}}$ be the original spaces of Alice and Bob, respectively; in this paper, we set $d=d_{A}=d_{B}$ by default.

Let $\BC^{k}$ be the auxiliary spaces carrying $k$-purification (or referred to as $k$-extension). The space after $k$-purification is denoted by $\cH := \BC^{kd} \cong \BC^{k} \otimes \BC^{d}$. We denote the permutation over $\cH^{\otimes N}$ by $\Delta$ which is defined by $\Delta : S_{N} \to \mr{Perm}^{k} \times \mr{Perm}^{d}$ where $\mr{Perm}^{k} : S_{N} \to \U((\BC^{k})^{\otimes N})$ is the natural unitary representation of permutation prescribed by $\mr{Perm}^{k}(\pi) \ket{i_{1} \ldots i_{N}}=\ket{ i_{\pi^{-1}(1)} \ldots i_{\pi^{-1}(N)}}$ for $\pi \in S_{N}$, and likewise for $\mr{Perm}^{d} : S_{N} \to \U((\BC^{d})^{\otimes N})$. That is to say, $\Delta(\pi)$ is defined by permuting auxiliary and original spaces with $\mr{Perm}^{k}(\pi)$ and $\mr{Perm}^{d}(\pi)$ respectively. We will often drop $\mr{Perm}^{k}$ and $\mr{Perm}^{d}$ so write $\Delta(\pi)=\pi \otimes \pi$ when the context is clear.

For Alice's auxiliary system, we introduce linear map $\cE : \BC^{k} \to \mr{Alt}^{k-1} \BC^{k}$ where $\mr{Alt}^{k-1} \BC^{k}$ is the anti-symmetric subrepresentation of $(\BC^{k})^{\otimes (k-1)}$.
Denote by $\ket{\phi_{k}}=\sum_{i=1}^{k} \ket{i i}$ the unnormalized $k$-dimensional maximally entangled state, and define $\hat{\phi}_{k}=\ket{\phi_{k}} \otimes \id_{d_{A}} \otimes \id_{d_{B}}$.
Denote by $\mr{sr}(\psi)$ the Schmidt rank of vector $\ket{\psi}$.

We denote $\lambda \vdash_{k} N$ if $\lambda$ is a Young diagram of $N$ boxes with at most $k$ rows, i.e., $\ell(\lambda) \leq k$. Denote by $\BY_{\lambda}$ the Specht module associated to Young diagram $\lambda$. Let $\lambda^{-}$ denote the Young diagram obtained from $\lambda$ with $\ell(\lambda)=k$ by removing its first column, that is, if $\lambda=(\lambda_1, \ldots , \lambda_k)$, then $\lambda^{-} = (\lambda_1-1, \ldots, \lambda_k-1)$.
Denote by $\BU_{\lambda}^{k}$ and $\BU_{\lambda}^{d}$ the irreducible representations associated with the Young diagram $\lambda$ for the unitary groups $\mathrm{U}(k)$ and $\mathrm{U}(d)$, respectively.

We write $\mu \subseteq \lambda$ for Young diagrams such that $\mu$ is contained in $\lambda$, i.e., $\mu_{i} \leq \lambda_{i}$ for all $i$.
We write $\lambda \searrow \mu$ if Young diagram $\lambda$ is obtained by appending exactly one box to $\mu$.




\subsection{Optimization problems for testing $k$-block positivity} \label{subsec:KBP-Optimization}

This section provides a brief review of $k$-block-positivity and formulates two optimization problems for testing it. Recall that in a bipartite system $\BC^{d_{A}} \otimes \BC^{d_{B}}$, any bipartite pure state with Schmidt rank at most $k$ can be written in the following form:
\begin{align}
\ket{\psi}=\sum_{i=1}^{k} \ket{x_{i}} \otimes \ket{y_{i}}\,, \: \text{ where for all $i$, } \ket{x_{i}} \in \BC^{d_{A}}, \: \ket{y_{i}} \in \BC^{d_{B}}.
\end{align}
Denote by $\mathrm{SR}_{k} (d_{A} , d_{B})$ the set consisting of pure states with Schmidt rank at most $k$.
A hermitian operator $X \in \mathrm{Herm}(\BC^{d_{A}} \otimes \BC^{d_{B}})$ is said to be $k$-block-positive if $X$'s expectation value is nonnegative for all pure states in $\mathrm{SR}_{k} (d_{A} , d_{B})$,
\begin{align}
\scp{\psi}{X}{\psi} \geq 0\,, \: \forall \ket{\psi} \in \mathrm{SR}_{k} (d_{A} , d_{B}).
\end{align}
A mixed state $\rho$ is said to have Schmidt number $k$ if it lies in the convex hull of $\mathrm{SR}_{k} (d_{A} , d_{B})$ \cite{PhysRevA.61.040301,PhysRevA.63.050301}. Denote by $\mathrm{SN}_{k} (d_{A} , d_{B})$ the set of Schmidt number $k$ states. In particular, $\mathrm{SN}_{1} (d_{A} , d_{B})$ is the set of separable states.
Clearly, $X \in \mathrm{Herm}(\BC^{d_{A}} \otimes \BC^{d_{B}})$ is $k$-block-positive iff $\tr (X \rho) \geq 0$ for all $\rho \in \mathrm{SN}_{k} (d_{A} , d_{B})$. Throughout, we assume $d=d_{A}=d_{B}$, while occasionally retaining the subscripts $A$ and $B$ for notational clarity.

\begin{definition}[Optimization: testing $k$-block-positivity] \label{definition:Optimization-k-block-positivity-original}
The $k$-block-positivity of $X \in \mathrm{Herm}(\BC^{d_{A}} \otimes \BC^{d_{B}})$ can be tested by solving the following optimization problem:
\begin{align}
&
\V=
\min \tr{(X \rho)},
\\
&
\text{subject to } \:
\rho \in \mathrm{SN}_{k} (d_{A},d_{B}), \: \text{ and } \tr\rho=1.
\nonumber
\end{align}
$X$ is $k$-block-positive iff $\V \geq 0$.
\end{definition}
In the optimization problem, $\tr(X \rho)$ serves as the \textit{objective function}. The \textit{feasible set} comprises all allowable states $\rho$, which are required to satisfy a set of constraints. This standard terminology is also used in the context of semidefinite programs.

It is clear that the minimal value of this linear objective function over the convex set $\mathrm{SR}_{k} (d_{A},d_{B})$ is attained at an extreme point. Since the extreme points of $\mathrm{SN}_{k} (d_{A},d_{B})$ are contained within the set of pure states of Schmidt rank at most $k$, the minimum can be found by optimizing over pure states in $\mathrm{SR}_{k} (d_{A},d_{B})$. Because of the inclusion $\mathrm{SN}_{1}(d,d) \subset \mathrm{SN}_{2}(d,d) \subset \cdots \subset \mathrm{SN}_{k}(d,d) \subset \cdots \subset \mathrm{SN}_{d-1}(d,d) \subset \mathrm{SN}_{d}(d,d)$, the optimal values of a minimization problem over these sets satisfy the following sequence of inequalities:
\begin{align}
&
\min_{\rho \in \mathrm{SN}_{d}(d,d)} \tr{(X \rho)} \leq \min_{\rho \in \mathrm{SN}_{d-1}(d,d)} \tr{(X \rho)} \leq \cdots \leq \min_{\rho \in \mathrm{SN}_{k}(d,d)} \tr{(X \rho)}
\nonumber \\
&
\leq \cdots \leq \min_{\rho \in \mathrm{SN}_{2}(d,d)} \tr{(X \rho)} \leq \min_{\rho \in \mathrm{SN}_{1}(d,d)} \tr{(X \rho)},
\end{align}
where $\min_{\rho \in \mathrm{SN}_{d}(d,d)} \tr{(X \rho)}$ amounts to solving $X$'s minimal eigenvalue.

\paragraph{Auxiliary spaces with $U^{\otimes k}$-symmetry} We define auxiliary spaces via $k$-purification, and introduce $\cE$ map to convert $\bar{U} \otimes U$-symmetry to $U^{\otimes k}$-symmetry.
\begin{definition}[$k$-purification and dualization]
For any $X \in \mathrm{Herm}(\mathbb{C}^{d} \otimes \mathbb{C}^{d})$ and $\rho \in \mathrm{Herm}(\mathbb{C}^{d} \otimes \mathbb{C}^{d})_{+}$, we define their $k$-purification $X_{k} \in \mathrm{Herm}(\mathbb{C}^{kd} \otimes \mathbb{C}^{kd})$ and $\rho_{k} \in \mathrm{Herm}(\mathbb{C}^{kd} \otimes \mathbb{C}^{kd})_{+}$ by:
\begin{align}
&
X_{k}:=
k \Pi_{k} \otimes X, \:
\text{ where }
\Pi_{k}=\sum_{i_1 , \ldots , i_{k}=1}^{k} \sum_{i'_1 , \ldots , i'_{k}=1}^{k} \frac{\epsilon_{i_1 \ldots i_k}\epsilon_{i'_1 \ldots i'_k}}{k!} \ketbra{i_1 \ldots i_k}{i'_1 \ldots i'_k},
\\
&
\rho_{k}:=
\sum_{i_{0},i_{1},j_{0},j_{1}=1}^{k}
(\cE \otimes \id_{k}) \ketbra{i_{0} i_{1} }{j_{0} j_{1}} (\cE^{\dagger} \otimes \id_{k}) \otimes \rho_{i_{0} i_{1}, j_{0} j_{1}}, \:
\text{ where } \rho_{i_{0} i_{1}, j_{0} j_{1}} \in \mathbb{M}_{d^2 \times d^2}(\mathbb{C}),
\end{align}
where $\Pi_{k}$ is the self-adjoint projector onto the $1$-dimensional fully anti-symmetric space that satisfies $\Pi_{k} \Pi_{k}=\Pi_{k}$, $\Pi_{k}^{\dagger}=\Pi_{k}$ and $\Pi_{k}=U^{\otimes k} \Pi_{k} {U^{\dagger}}^{\otimes k}$ for all $U \in \U(k)$, associated with Young diagram $(1^{k})$; Linear map $\cE : \BC^{k} \to \mr{Alt}^{k-1} \BC^{k}$ induces an isomorphism with the dual space by
\begin{align}
\cE \ket{i}= \sum_{a_2 , \ldots , a_{k}=1}^{k} \frac{\epsilon_{a_2 \ldots a_k i}}{\sqrt{(k-1)!}} \ket{a_2 \ldots a_k}.
\end{align}
\end{definition}
When the domain is clear from the context, we will abuse notation by also writing $\cE$ to denote its trivial extension $\cE \otimes \id$, where $\id$ is the identity map acting on the auxiliary copies of Bob.
$\cE$ satisfies $\cE^{\dagger} \cE=\id_{k}$ and $\cE \cE^{\dagger}=\pr_{(1^{k-1})}$, which is the projector of $\mr{Alt}^{k-1} \BC^{k}$, and
\begin{align}
k \cE^{\dagger} \Pi_{k} \cE
=
\ketbra{\phi_{k}}{\phi_{k}},
\: \text{ where }
\ket{\phi_{k}}=\sum_{i=1}^{k} \ket{i i}.
\label{eq:Dualization-MaximallyEntangledProj}
\end{align}
The greater theoretical simplicity of the Schur-Weyl duality motivates our use of $\cE$ to convert the $\bar{U} \otimes U$-symmetry carried by $\ketbra{\phi_{k}}{\phi_{k}}$ into the $U^{\otimes k}$-symmetry carried by $\Pi_{k}$ \cite{chen2025srkbp}, even though the invariants of $\bar{U} \otimes U$ are not the same as those of $U^{\otimes k}$ in principle.

\paragraph{Representing feasible pure states}
\begin{proposition} \label{Proposition:PureStateKraus-xy}
Every $\ket{\varphi} \in \mr{SR}_{1} (kd_{A}, kd_{B})$ can be represented in the following manner by some $x \in \BM_{d_{A} \times k}(\BC)$ and $y \in \BM_{d_{B} \times k}(\BC)$ that satisfy normalization $\tr(x^{\dagger} x)\tr(y^{\dagger} y)=1$:
\begin{align}
\mr{SR}_{1} (kd_{A},kd_{B}) \ni \ket{\varphi}=\sum_{i,j=1}^{k} \ket{i j} \otimes (x \otimes y) \ket{i j}.
\label{eq:PureStateKraus-xy}
\end{align}
Since $\cE$ is a local operation on Alice's system, $\cE \ket{\varphi}$ is also separable with respect to the bipartition between Alice and Bob, hence we write $\cE \ket{\varphi} \in \mr{SR}_{1} (kd_{A},kd_{B})$ despite a little abuse notation.
\end{proposition}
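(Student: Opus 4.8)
The plan is to unwind the definitions. A vector in $\mr{SR}_{1}(kd_{A},kd_{B})$ is, by definition, a pure state of Schmidt rank at most one across the Alice--Bob bipartition, i.e.\ a product vector $\ket{\varphi}=\ket{a}\otimes\ket{b}$ with $\ket{a}$ in Alice's space $\BC^{kd_{A}}\cong\BC^{k}\otimes\BC^{d_{A}}$ and $\ket{b}$ in Bob's space $\BC^{kd_{B}}\cong\BC^{k}\otimes\BC^{d_{B}}$. First I would expand each factor in the computational basis $\{\ket{i}\}_{i=1}^{k}$ of the respective auxiliary register: write $\ket{a}=\sum_{i=1}^{k}\ket{i}\otimes\ket{x_{i}}$ and $\ket{b}=\sum_{j=1}^{k}\ket{j}\otimes\ket{y_{j}}$, where $\ket{x_{i}}\in\BC^{d_{A}}$ and $\ket{y_{j}}\in\BC^{d_{B}}$ are the (unnormalized) conditional vectors. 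I then collect these vectors as the columns of two matrices, defining $x\in\BM_{d_{A}\times k}(\BC)$ and $y\in\BM_{d_{B}\times k}(\BC)$ by $x\ket{i}=\ket{x_{i}}$ and $y\ket{j}=\ket{y_{j}}$; this is exactly the data the statement asks for, and conversely any such $x,y$ reproduce a legitimate product vector, so the parametrization is onto.

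Next I would reassemble the tensor product and match it to the asserted form. Taking the product gives $\ket{\varphi}=\sum_{i,j=1}^{k}\ket{i}\otimes\ket{x_{i}}\otimes\ket{j}\otimes\ket{y_{j}}$ in the factor ordering (Alice-aux, Alice-orig, Bob-aux, Bob-orig). Reordering the tensor factors so that the two auxiliary $\BC^{k}$ registers come first and the two original registers come last --- the identification implicit in writing the purified space with its auxiliary and original parts grouped --- turns this into $\sum_{i,j=1}^{k}\ket{ij}\otimes\bigl(\ket{x_{i}}\otimes\ket{y_{j}}\bigr)=\sum_{i,j=1}^{k}\ket{ij}\otimes(x\otimes y)\ket{ij}$, which is precisely \eqref{eq:PureStateKraus-xy}.

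Finally the normalization: since $\ket{\varphi}$ factorizes, $\inner{\varphi}=\inner{a}\,\inner{b}$, and $\inner{a}=\sum_{i=1}^{k}\inner{x_{i}}=\sum_{i=1}^{k}(x^{\dagger}x)_{ii}=\tr(x^{\dagger}x)$, and likewise $\inner{b}=\tr(y^{\dagger}y)$; hence $\inner{\varphi}=1$ is equivalent to $\tr(x^{\dagger}x)\tr(y^{\dagger}y)=1$, as claimed. For the closing remark, note that $\cE=\cE\otimes\id$ acts nontrivially only on Alice's auxiliary factor, so it is a local operator on Alice's side of the A$|$B cut; applying a one-sided local map to a product vector leaves it a product vector (of Schmidt rank at most one), which justifies writing $\cE\ket{\varphi}\in\mr{SR}_{1}(kd_{A},kd_{B})$. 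I expect no genuine obstacle here --- the content is just a change of variables from a product vector to its coefficient matrices --- and the only point demanding care is the bookkeeping of the tensor-factor reordering (a swap of Bob's auxiliary register past Alice's original register) together with the column convention defining $x$ and $y$.
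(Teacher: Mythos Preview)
Your proof is correct and follows essentially the same approach as the paper's: both write $\ket{\varphi}=\ket{\varphi_{A}}\otimes\ket{\varphi_{B}}$, expand each factor in the auxiliary $\BC^{k}$ basis, and identify the coefficient vectors as the columns of $x$ and $y$. The paper's proof is a single line, so your version simply spells out the tensor-factor bookkeeping and the normalization check more explicitly.
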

\begin{proof}
Write $\ket{\varphi}=\ket{\varphi_{A} \otimes \varphi_{B}}$ with $\ket{\varphi_{A}}=\sum_{i=1}^{k} \sum_{r=1}^{d_{A}} x_{ir}\ket{ir}=\ket{i} \otimes x \ket{i}$ and $\ket{\varphi_{B}}$ likewise.
\end{proof}

\paragraph{Optimization problem based on $k$-purification}
\begin{definition}[Optimization: testing $k$-block-positivity via $k$-purification] \label{definition:SDP-k-block-positivity}
The $k$-block-positivity of $X \in \mathrm{Herm}(\BC^{d_{A}} \otimes \BC^{d_{B}})$ can be tested by solving the following optimization problem:
\begin{align}
&
\V_{k}=
\min \tr{(X_{k} \rho_{k})},
\\
&
\text{subject to } \:
\rho_{k} \in \mr{SN}_{1} (kd,kd), \: \text{ and } \tr\rho_{k}=1.
\nonumber
\end{align}
Because $X$ is $k$-block positive iff $X_{k}$ is block positive, and $X_{k}$ is block-positive iff $\V_{k}=0$\,.
\end{definition}
Notably, the criterion changes from $\V \geq 0$ to $\V_{k}=0$, for the reason that the $1$-dimensional projector introduced by $k$-purification has nontrivial kernel. For instance, if $x y^{\tp}=0$ then $x \otimes y \ket{\phi_{k}}=0$, thus $\hat{\phi}_{k}^{\dagger} \ket{\varphi}=0$ where $\hat{\phi}_{k}^{\dagger}=\bra{\phi_{k}} \otimes \id_{d_{A}} \otimes \id_{d_{B}}$. For more details, see the following proposition, whose proof is relegated to Appendix~\ref{proof:proposition:Bound-SNK-SN1-SNk}.
\begin{proposition} \label{proposition:Bound-SNK-SN1-SNk}
The following relation holds for the minimal values of optimization problems defined in Definition~\ref{definition:Optimization-k-block-positivity-original} and Definition~\ref{definition:SDP-k-block-positivity}:
\begin{itemize}
\item If $\V \geq 0$, then $\V_{k}=0$.
\item Otherwise, $k \V_{k} \leq \V \leq \V_{k} < 0$.
\end{itemize}
\end{proposition}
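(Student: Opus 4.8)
The plan is to reduce both optimization problems to quantities attached to Schmidt-rank-$\le k$ vectors on $\BC^{d_A}\otimes\BC^{d_B}$ and then compare them; throughout, $\cE$ denotes its trivial extension $\cE\otimes\id$ acting as the identity on Bob's auxiliary factor and on both copies of $\BC^{d}$. \textbf{Step 1 (a key identity).} From $X_k=k\Pi_k\otimes X$, the dualization relation $k\cE^{\dagger}\Pi_k\cE=\ketbra{\phi_k}{\phi_k}$ of \eqref{eq:Dualization-MaximallyEntangledProj}, and the fact that $\cE$ acts only on Alice's auxiliary factor (hence commutes with $X$), I would first record $\cE^{\dagger} X_k\,\cE=\ketbra{\phi_k}{\phi_k}\otimes X$. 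Since $\Pi_k$ projects onto the fully antisymmetric line of $(\BC^{k})^{\otimes k}$, which sits inside $\mr{Alt}^{k-1}\BC^{k}\otimes\BC^{k}=\mathrm{ran}(\cE\cE^{\dagger})$, one moreover gets $X_k=\cE\big(\ketbra{\phi_k}{\phi_k}\otimes X\big)\cE^{\dagger}$.

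\textbf{Step 2 (reformulating $\V_k$).} For any separable trace-$1$ state $\rho_k$ on $\BC^{kd}\otimes\BC^{kd}$, Step 1 gives $\tr(X_k\rho_k)=\tr\big((\ketbra{\phi_k}{\phi_k}\otimes X)\,\sigma\big)$, where $\sigma:=\cE^{\dagger}\rho_k\cE$ is positive semidefinite, separable (being the conjugate of a separable state by an operator local on Alice), and $\tr\sigma=\tr\big((\pr_{(1^{k-1})}\otimes\id)\rho_k\big)\le1$; conversely, every separable positive semidefinite $\sigma$ with $\tr\sigma\le1$ arises this way from a separable trace-$1$ $\rho_k$ (pad $\cE\sigma\cE^{\dagger}$ with a separable state whose Alice-auxiliary support avoids $\mr{Alt}^{k-1}\BC^{k}$, using $k\ge2$). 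Decomposing $\sigma$ into product pure states, it thus suffices to control $\scp{\chi}{\ketbra{\phi_k}{\phi_k}\otimes X}{\chi}$ over product vectors $\ket{\chi}=\ket{\chi_A}\otimes\ket{\chi_B}$ with $\|\ket{\chi}\|\le1$. Put $\ket{\psi}:=\hat\phi_k^{\dagger}\ket{\chi}$, so that this quantity equals $\scp{\psi}{X}{\psi}$. Expanding $\ket{\chi_A}$ and $\ket{\chi_B}$ in the auxiliary basis and contracting against $\bra{\phi_k}$ exhibits $\ket{\psi}$ as a sum of at most $k$ product vectors, so $\mr{sr}(\psi)\le k$; and a Gram-matrix computation together with $\tr(AB)\le\tr(A)\tr(B)$ for positive semidefinite $A,B$ gives $\|\ket{\psi}\|^{2}\le\|\ket{\chi_A}\|^{2}\|\ket{\chi_B}\|^{2}=\|\ket{\chi}\|^{2}\le1$. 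Writing $\ket{\psi}=\|\ket{\psi}\|\,\ket{\hat\psi}$ (normalized) when $\ket{\psi}\ne0$, we then have $\scp{\psi}{X}{\psi}=\|\ket{\psi}\|^{2}\,\scp{\hat\psi}{X}{\hat\psi}$ with $\|\ket{\psi}\|^{2}\in[0,1]$ and $\scp{\hat\psi}{X}{\hat\psi}\ge\V$.

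\textbf{Step 3 (the two cases).} If $\V\ge0$, every such $\scp{\psi}{X}{\psi}$ is $\ge0$, so $\V_k\ge0$; and taking $\ket{\chi}$ whose Alice- and Bob-auxiliary factors are distinct basis vectors forces $\ket{\psi}=0$ and value $0$ (this is the nontrivial kernel of the rank-one projector), giving $\V_k\le0$; hence $\V_k=0$. If $\V<0$: for the lower bound, $\scp{\psi}{X}{\psi}$ is either $\ge0>\V$ (when $\ket{\psi}=0$ or $\scp{\hat\psi}{X}{\hat\psi}\ge0$) or equals $\|\ket{\psi}\|^{2}\scp{\hat\psi}{X}{\hat\psi}\ge\scp{\hat\psi}{X}{\hat\psi}\ge\V$ (since $0<\|\ket{\psi}\|^{2}\le1$ and the bracket is negative), so $\V_k\ge\V$. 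For the upper bound, let $\ket{\psi_*}$ be a normalized minimizer of $\V$ with Schmidt decomposition $\ket{\psi_*}=\sum_{l=1}^{k}\sqrt{p_l}\,\ket{e_l}\otimes\ket{f_l}$, and set $\ket{\chi}=\big(\sum_{l}p_l^{1/4}\ket{l}\otimes\ket{e_l}\big)\otimes\big(\sum_{l}p_l^{1/4}\ket{l}\otimes\ket{f_l}\big)\big/\sum_{l}\sqrt{p_l}$, a normalized product vector with $\hat\phi_k^{\dagger}\ket{\chi}=\ket{\psi_*}\big/\sum_{l}\sqrt{p_l}$. Then $\rho_k:=\cE\ketbra{\chi}{\chi}\cE^{\dagger}$ is separable of trace $1$ and $\tr(X_k\rho_k)=\V\big/\big(\sum_{l}\sqrt{p_l}\big)^{2}\le\V/k$ since $\big(\sum_{l=1}^{k}\sqrt{p_l}\big)^{2}\le k$ by Cauchy--Schwarz. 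Altogether $k\V_k\le\V\le\V_k<0$.

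\textbf{Main obstacle.} The conceptual core is Steps 1--2: seeing that $X_k$ is just $\ketbra{\phi_k}{\phi_k}\otimes X$ conjugated by the local isometry $\cE$, which turns ``separable after $k$-purification'' into ``optimize $\scp{\psi}{X}{\psi}$ over \emph{sub}-normalized Schmidt-rank-$\le k$ vectors''. The mismatch between $\|\ket{\psi}\|\le1$ and $\|\ket{\psi}\|=1$ --- equivalently the squared nuclear norm $\big(\sum_l\sqrt{p_l}\big)^{2}$, which ranges over $[1,k]$ on normalized Schmidt-rank-$\le k$ vectors --- is exactly what upgrades the qualitative equivalence $\V\ge0\Leftrightarrow\V_k=0$ to the quantitative sandwich $k\V_k\le\V\le\V_k$; the rest (the direction of each inequality, the case $\ket{\psi}=0$ and the sign of $\scp{\hat\psi}{X}{\hat\psi}$, and the Gram-matrix bound) is routine bookkeeping.
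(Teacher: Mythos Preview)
Your proof is correct and follows essentially the same route as the paper's: reduce to product pure states, contract against $\hat\phi_k$ to land on sub-normalized Schmidt-rank-$\le k$ vectors, and bound via $\|\hat\phi_k^{\dagger}\ket{\chi}\|\le 1$ together with a case split on the sign of $\V$. The only cosmetic difference is the purification used for the upper bound---you take balanced $p_l^{1/4}$ weights (giving $\V_k\le\V/(\sum_l\sqrt{p_l})^2$) whereas the paper places the singular values on one side (giving $\V_k\le\V/\mr{sr}(\psi_m)$)---but both estimates immediately yield $k\V_k\le\V$.
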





\section{Semidefinite programs for testing $k$-block positivity} \label{sec:KBP-SDP}

\subsection{Reduced semidefinite programs indexed by Young diagrams} \label{subsec:KBP-SDP-reduced}

The optimization problem defined in Definition~\ref{definition:SDP-k-block-positivity} is difficult to solve in principle,
but the hierarchical SDPs constructed from extendibility furnishes increasingly tight approximations, which are guaranteed to converge to the true solution as the levels of hierarchy tend to infinity.
\begin{definition}[$k$-block-positivity testing SDP with $N$-Bose symmetric extension] \label{definition:SDP-k-block-positivity-N-BSE}
The $N$-level SDP of the Bosonic extendibility hierarchy is defined as follows,
\begin{align}
&
\VS_{N}:=
\min \tr{(X_{k,N} \rho_{k,N})}, \\
&
\text{subject to } \:
\rho_{k,N} \in \mathrm{Bos}(\mathcal{H}_{A} , \mathcal{H}_{B}^{\otimes N})_{+}, \ \text{ and } \tr\rho_{k, N}=1,
\nonumber
\end{align}
where $\cH=\BC^{k} \otimes \BC^{d}$. Here the $X_{k,N}$ and $\rho_{k,N}$ are defined to be
\begin{align}
X_{k,N}
&=
k \Pi_{k} \otimes \id_{k}^{\otimes (N-1)} \otimes X \otimes \id_{d_{B}}^{\otimes (N-1)},
\label{eq:X-kExtension-NExtension}
\\
\rho_{k,N}
&=
\sum_{i_{0} , \ldots , j_{N}=1}^{k}
\cE \ketbra{i_{0} i_{1} \cdots i_{N}}{j_{0} j_{1} \cdots j_{N}} \cE^{\dagger}
\otimes
\rho_{i_{0} i_{1} \cdots i_{N}, j_{0} j_{1} \cdots j_{N}},
\label{eq:rho-kExtension-NExtension}
\end{align}
where $\rho_{i_{0} i_{1} \cdots i_{N}, j_{0} j_{1} \cdots j_{N}} \in \mathbb{M}_{d_{A}d_{B}^{N}}(\mathbb{C})$.
The constraint $\rho_{k,N} \in \mathrm{Bos}(\mathcal{H}_{A} , \mathcal{H}_{B}^{\otimes N})_{+}$ requires that $\rho_{k,N} \geq 0$ and that $\rho_{k,N}$ is invariant under all permutations of Bob’s systems, i.e., $\rho_{k,N}=\Delta(\pi) \rho_{k,N}=\rho_{k,N} \Delta(\pi)$ for all $\pi \in S_{N}$, the so-called Bosonic permutation symmetry, where the left and right actions of $\Delta(\pi)$ are defined as follows:
\begin{align}
\begin{aligned}
\Delta(\pi) \rho_{k,N}
&:=
\sum_{i_{0} , \ldots , j_{N}=1}^{k}
(\cE \otimes \pi) \ketbra{i_{0} i_{1} \cdots i_{N}}{j_{0} j_{1} \cdots j_{N}} \cE^{\dagger}
\otimes
(\id_{d_{A}} \otimes \pi) \rho_{i_{0} i_{1} \cdots i_{N}, j_{0} j_{1} \cdots j_{N}},
\\
\rho_{k,N} \Delta(\pi)
&:=
\sum_{i_{0} , \ldots , j_{N}=1}^{k}
\cE \ketbra{i_{0} i_{1} \cdots i_{N}}{j_{0} j_{1} \cdots j_{N}} (\cE^{\dagger} \otimes \pi)
\otimes
\rho_{i_{0} i_{1} \cdots i_{N}, j_{0} j_{1} \cdots j_{N}} (\id_{d_{A}} \otimes \pi).
\end{aligned}
\label{eq:Permutation-kExtension}
\end{align}
\end{definition}
An alternative approach formulates the hierarchical SDPs via $N$-extendible states, i.e., states that arise as the marginal of a state that is symmetric under permutations of $N$ subsystems. In our formulation, the trivial extension $\id_{k}^{\otimes (N-1)} \otimes \id_{d_{B}}^{\otimes (N-1)}$ is used to construct such symmetric extensions.

It is clear that the reduced state of a Bose symmetric state is Bose symmetric as well. Consequently, $\VS_{N} \leq \VS_{N+1}$ holds for every $N$, and eventually in the limit of infinite level $\VS_{\infty} := \lim_{N \to \infty} \VS_{N}=\V_{k}$ due to quantum de Finetti theorem~\cite{caves2002DeFinetti,christandl2007DeFinetti}. Therefore,
\begin{align}
\VS_{1} \leq \cdots \leq \VS_{N} \leq \cdots \leq \VS_{\infty} = \V_{k} \leq 0.
\label{eq:sequence-SDPminimalvalue}
\end{align}

\medskip

One can use the symmetries for reducing the difficulty of solving SDP.
Formally, suppose a group $G$ acts linearly on the given element $\mathcal{X} \in V$ via a map $\mathfrak{O} : G \to \mathrm{End}(V)$ and assume $\mathfrak{O}_{g}(\mathcal{X})=\mathcal{X}$ for all $g \in G$. Then on the other hand, the feasible states are invariant under the $\mathfrak{O}^{*}$: the equality $\langle\rho,\mathcal{X}\rangle=\langle\rho,\mathfrak{O}_{g}(\mathcal{X})\rangle=\langle\mathfrak{O}^{*}(\rho),\mathcal{X}\rangle$ for every $\rho \in \mathscr{F}$ implies that we can restrict on $\mathfrak{O}^{*}(\mathscr{F})$ where the $\langle\rho,\mathcal{X}\rangle$ is the objective function and $\mathscr{F}$ the set of feasible states. Therefore, by decomposing the representation $\mathfrak{O}$ into irreducible components, we obtain $\mathcal{X} \cong \bigoplus_{\lambda} \mathcal{X}_{\lambda}$ and $\mathfrak{O}^{*}(\rho) \cong \bigoplus_{\lambda} \rho_{\lambda}$ where $\lambda$ labels irreducible representations of $G$, and each $X_{\lambda}$ and $\rho_{\lambda}$ is a smaller component. Hence, the original SDP yields a set of smaller independent SDPs, one per $\lambda$, which reduces the problem size.

The SDP in Definition~\ref{definition:SDP-k-block-positivity-N-BSE} admits symmetry reductions.
Roughly speaking, the space to be considered, admits: (i) auxiliary unitary $\U(k)$-symmetry carried by $(\BC^{k})^{\otimes (N+k-1)}$ from $k$-purified space $(\BC^{k})^{\otimes (N+k-1)} \otimes (\BC^{d})^{\otimes (N+1)}$ where Alice's $(\BC^{k})^{\otimes (k-1)}$ arises from the dualization $\cE : \BC^{k} \to \mr{Alt}^{k-1} \BC^{k} \subset (\BC^{k})^{\otimes (k-1)}$; (ii) Bosonic permutation symmetry from extendibility carried by Bob's copies $(\BC^{kd})^{\otimes N}$. To be more explicit, the two symmetries are listed as follows:
\begin{itemize}
\item the $\U(k)$-symmetry carried by $X_{k,N}$,
\begin{align}
X_{k,N}=( U^{\otimes (N+k-1)} \otimes \id_{d_{A}} \otimes \id_{d_{B}}^{\otimes N} ) X_{k,N} ( {U^{\dagger}}^{\otimes (N+k-1)} \otimes \id_{d_{A}} \otimes \id_{d_{B}}^{\otimes N} )
, \: \ \forall U \in \U(k);
\end{align}

\item the Bosonic permutation symmetry carried by $\rho_{k,N}$,
\begin{align}
\rho_{k,N}=\Delta(\pi) \rho_{k,N}=\rho_{k,N} \Delta(\pi)
, \: \ \forall \pi \in S_{N}.
\end{align}
\end{itemize}

\paragraph{SDP symmetry reduction} The $\U(k)$-symmetry allows the $N$-level SDP to be decomposed into independent SDPs indexed by Young diagrams, leading to reduced SDPs~\cite{chen2025srkbp},
\begin{align}
X_{k,N}
&=
\bigoplus_{\substack{\lambda \vdash (N+k-1), \\ \ell(\lambda)=k }}
X_{\lambda},
\: \text{ where } X_{\lambda}=k \Pi_{k} \otimes \pr_{\lambda^{-}} \otimes X \otimes \id_{d}^{\otimes (N-1)},
\label{eq:Uk-invariant-X}
\\
\cT(\rho_{k,N})
&=\int_{\mr{U}(k)} ( {U}^{\otimes (N+k-1)} \otimes \id_{d_{A}} \otimes \id_{d_{B}}^{\otimes N} ) \rho_{k,N} ( {U^{\dagger}}^{\otimes (N+k-1)} \otimes \id_{d_{A}} \otimes \id_{d_{B}}^{\otimes N} ) dU
\nonumber \\
&\cong \bigoplus_{\lambda \vdash_{k} (N+k-1)} w_{\lambda}
\rho_{\lambda}
, \: \text{ where }
\rho_{\lambda}=\frac{\id_{\BU^{k}_{\lambda}}}{\dim \BU^{k}_{\lambda}} \otimes \sum_{p_{\lambda} , p'_{\lambda}}
\cE \cE^{\dagger}
\ketbra{p_{\lambda}}{p'_{\lambda}}
\cE \cE^{\dagger}
\otimes \rho_{p_{\lambda} p'_{\lambda}},
\label{eq:Uk-invariant-rho}
\end{align}
where $\cT(\rho_{k,N})$ is the $\U(k)$-invariant state on $\rho_{k,N}$, satisfying $\tr{(X_{k,N} \cT(\rho_{k,N}))}=\tr{(X_{k,N} \rho_{k,N})}$.
The notations are explained as follows:
\begin{enumerate}
\item $w_{\lambda} \geq 0$ satisfies $\sum_{\lambda} w_{\lambda}=1$, and $\rho_{\lambda}$ satisfies $\tr \rho_{\lambda}=1$ and $\rho_{\lambda}=\Delta(\pi) \rho_{\lambda}=\rho_{\lambda} \Delta(\pi)$ for all $\pi \in S_{N}$.

\item The symbols $p_{\lambda}$ and $p'_{\lambda}$ are standard Young tableaux of shape $\lambda$, and $\rho_{p_{\lambda} p'_{\lambda}} \in \BM_{d_{A} d_{B}^{N}}(\BC)$.

\item Young diagram $\lambda^{-}$ is related to $\lambda=(\lambda_1,\ldots,\lambda_k)$ by $\lambda^{-}=(\lambda_1-1,\ldots,\lambda_k-1)$, or say, $\lambda$ is obtained by appending $(1^{k})$ to $\lambda^{-}$. Applying Littlewood-Richardson rule, we have
\begin{align}
\Pi_{k} \otimes \pr_{\lambda^{-}}=\Pi_{k} \pr_{\lambda} \Pi_{k}
+\text{other Young projectors with rows more than $k$},
\nonumber
\end{align}
while the other Young projectors will annihilate the elements of $(\BC^{k})^{\otimes (N+k-1)}$, thus from now on we look
\begin{align}
\Pi_{k} \otimes \pr_{\lambda^{-}}=\Pi_{k} \pr_{\lambda} \Pi_{k}
\label{eq:ProjectionEquality-klambda-}
\end{align}
where $\pr_{\lambda^{-}}$ and $\pr_{\lambda}$ are the central projections corresponding to Young diagrams $\lambda^{-}$ and $\lambda$ respectively.
Without the restriction $\ell(\lambda) \leq k$, the recoupling process could yield Young diagrams with more than $k$ rows. The decomposition of the auxiliary part of $\rho_{k,N}$ is constrained to Young diagrams with at most $k$ rows, which we denote by $\lambda \vdash_{k} (N+k-1)$, rather than the full set $\lambda \vdash (N+k-1)$.

\end{enumerate}

\begin{definition}[Reduced SDP: $\U(k)$-symmetry] \label{definition:ReducedSDP-Uk}
Let $\lambda \vdash_{k} (N+k-1)$, we define
\begin{align}
&
\VS_{\lambda}=\min_{\rho_{\lambda} \geq 0} \tr (X_{\lambda} \rho_{\lambda}), \\
&
\text{ subject to } \rho_{\lambda}=\Delta(\pi) \rho_{\lambda}=\rho_{\lambda} \Delta(\pi), \: \forall \pi \in S_{N},
\: \text{ and }
\tr \rho_{\lambda} = 1,
\end{align}
where $\rho_{\lambda}$, given by Eq.\eqref{eq:Uk-invariant-rho}, is supported on the subspace corresponding to the block indexed by $\lambda$,
\begin{align}
\rho_{\lambda}
=
(\pr_{\lambda} \otimes \id_{d_{A}} \otimes \id_{d_{B}}^{\otimes N} ) \rho_{\lambda} (\pr_{\lambda} \otimes \id_{d_{A}} \otimes \id_{d_{B}}^{\otimes N} ).
\label{eq:ReducedSDP-Uk-LambdaConfinement}
\end{align}
Here $\pr_{\lambda}$ is the central projection of $\lambda$.
Then the SDP in Definition~\ref{definition:SDP-k-block-positivity-N-BSE} is solved by (see \cite{chen2025srkbp})
\begin{align}
&
\VS_{N}=
\min_{\lambda_{k} \vdash (N+k-1)} \VS_{\lambda}.
\label{eq:VSN-minVSlambda}
\end{align}
\end{definition}

The Young diagram $\lambda$ that appears in Eq.\eqref{eq:ProjectionEquality-klambda-} has exactly $k$ rows. Therefore, any block indexed by a Young diagram with fewer than $k$ rows lies in the kernel of $X_{k,N}$ and contributes zero to the objective function. Consequently, only blocks indexed by Young diagrams with exactly $k$ rows need to be considered.

\paragraph{Two questions}
This paper aims to address the following questions:
\begin{itemize}
\item Economical scheme on Young diagrams: Solving all $\VS_{\lambda}$ requires substantial computational resources, especially when $N$ is large. Is it possible to avoid running over all Young diagrams, meanwhile still ensuring that the $k$-block-positivity test remains valid?
\item Hierarchy collapse: How can we characterize the computational resources or the complexity of the algorithm? In the case of $k=d$, testing $d$-block-positivity should be straightforward, and the extendibility hierarchy is unnecessary, since it only depends on the minimal eigenvalue of $X$. Can the hierarchy collapse be read from the characterization?
\end{itemize}

Our response to the second question will build upon the answer to the first. To lay the groundwork for a clear analysis of the first question, the next subsection is devoted to a tiny revision of the SDP in Definition~\ref{definition:ReducedSDP-Uk}.




\subsection{Relaxation of the trace constraint} \label{subsec:RelaxTraceConstraint}

Given a permutation symmetric vector, we can generate permutation symmetric vectors via central elements of $\BC[S_{N}]$.
\begin{proposition}
Suppose that the vector $\ket{\psi}$ carries permutation symmetry, namely, $\ket{\psi}=\Delta(\pi) \ket{\psi}$ for any $\pi \in S_{N}$. Let $z$ be a central element of $\BC[S_{N}]$, i.e., $z \pi=\pi z$ for all $\pi \in S_{N}$, then the vector $(z \otimes \idm) \ket{\psi}$ also carries permutation symmetry, where we write shorthand $\idm=\id_{d_{A}} \otimes \id_{d_{B}}^{\otimes N}$.
\end{proposition}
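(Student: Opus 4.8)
The plan is to reduce the statement to the elementary fact that a central element of $\BC[S_{N}]$ commutes with every group element, transported through the permutation representation acting on the auxiliary copies. First I would make the action of $z$ explicit: extend $\mr{Perm}^{k}$ linearly to the group algebra, so that for $z=\sum_{\sigma\in S_{N}}c_{\sigma}\,\sigma$ the operator ``$z\otimes\idm$'' appearing in the statement is by definition $\big(\sum_{\sigma}c_{\sigma}\,\mr{Perm}^{k}(\sigma)\big)\otimes\idm$, acting as $\mr{Perm}^{k}(z)$ on the $N$ auxiliary tensor factors that $\Delta(\pi)$ permutes, and as the identity on every other factor (the $k-1$ dualization copies produced by $\cE$ and all original copies of Alice and Bob).

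Next I would record the two commutation facts. On one hand, $\Delta(\pi)=\mr{Perm}^{k}(\pi)\otimes\mr{Perm}^{d}(\pi)$ acts on the auxiliary side through $\mr{Perm}^{k}(\pi)$ and on the original side through $\mr{Perm}^{d}(\pi)$, whereas $z\otimes\idm$ is nontrivial only on those auxiliary factors. On the other hand, since $\mr{Perm}^{k}$ is an algebra homomorphism $\BC[S_{N}]\to\mathrm{End}((\BC^{k})^{\otimes N})$, the identity $z\pi=\pi z$ in $\BC[S_{N}]$ yields $\mr{Perm}^{k}(z)\,\mr{Perm}^{k}(\pi)=\mr{Perm}^{k}(\pi)\,\mr{Perm}^{k}(z)$ for every $\pi\in S_{N}$. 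Combining the two, $z\otimes\idm$ commutes with $\Delta(\pi)$ as operators, for every $\pi\in S_{N}$.

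Finally I would apply this to $\ket{\psi}$: for any $\pi\in S_{N}$,
\begin{align}
\Delta(\pi)\,(z\otimes\idm)\ket{\psi}
=(z\otimes\idm)\,\Delta(\pi)\ket{\psi}
=(z\otimes\idm)\ket{\psi},
\end{align}
where the first equality is the commutation just established and the second is the assumed permutation symmetry $\Delta(\pi)\ket{\psi}=\ket{\psi}$. Since $\pi$ was arbitrary, $(z\otimes\idm)\ket{\psi}$ carries permutation symmetry, which is the claim.

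There is no substantive obstacle here; the only point that needs care is the bookkeeping of tensor factors, namely being explicit that $z$ acts precisely on the $N$ auxiliary slots permuted by $\Delta(\pi)$ (and not on the dualization copies or any original slot), so that ``$z\otimes\idm$ commutes with $\Delta(\pi)$'' genuinely follows from the centrality of $z$ rather than from any relation on the original spaces. It is also worth noting that the same one-line argument, applied on the left and on the right separately, shows that $z\otimes\idm$ preserves the two-sided invariance $\rho=\Delta(\pi)\rho=\rho\,\Delta(\pi)$, which is the form in which this construction feeds into Corollary~\ref{cor:mu-PermInvVec}.
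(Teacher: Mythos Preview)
Your proposal is correct and follows essentially the same route as the paper: the paper's proof is the one-line computation $\Delta(\pi)(z\otimes\idm)\ket{\psi}=(\pi\otimes\pi)(z\otimes\idm)\ket{\psi}=(z\pi\otimes\pi)\ket{\psi}=(z\otimes\idm)\ket{\psi}$, which is exactly your final display, and your preceding paragraphs simply unpack the commutation $\Delta(\pi)(z\otimes\idm)=(z\otimes\idm)\Delta(\pi)$ more carefully than the paper does. Your added bookkeeping about which tensor slots $z$ touches is accurate and harmless, though the paper is content to leave it at the level of the shorthand $\Delta(\pi)=\pi\otimes\pi$.
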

\begin{proof}
It is straightforward to check for any $\pi \in S_{N}$
\begin{align}
\Delta(\pi) (z \otimes \idm) \ket{\psi}
=
(\pi \otimes \pi) (z \otimes \idm) \ket{\psi}
=
(z \pi \otimes \pi) \ket{\psi}
=
(z \otimes \idm) \ket{\psi}.
\nonumber
\end{align}
\end{proof}
\paragraph{An important construction}
The above proposition yields a corollary which enables an important construction for the subsequent analysis.
\begin{corollary} \label{cor:mu-PermInvVec}
Let $\mu \vdash N$ with $\ell(\mu) \leq k$
and $\mu \searrow \lambda^{-}$,
the following vector carries permutation symmetry,
\begin{align}
\ket{\varphi_{\lambda^{-}}(x,y)}
=
\sum_{i_{0},i_{1},\ldots,i_{N}=1}^{k}
\sum_{\mu : \mu \searrow \lambda^{-}}
(\cE \otimes P_{\mu}) \ket{i_0 i_1 \ldots i_N} \otimes (x \otimes y^{\otimes N}) \ket{i_0 i_1 \ldots i_N},
\end{align}
where $\pr_{\mu}$ is the central projection of $\mu$. Linear maps $x : \BC^{k} \to \BC^{d_{A}}$ and $y : \BC^{k} \to \BC^{d_{B}}$ can be represented as matrices $x \in \BM_{d_{A} \times k}(\BC)$ and $y \in \BM_{d_{B} \times k}(\BC)$, respectively.
Obviously, $\ket{\varphi_{\lambda^{-}}(x,y)}=\Delta(\pi) \ket{\varphi_{\lambda^{-}}(x,y)}$ for any $\pi \in S_{N}$.
\end{corollary}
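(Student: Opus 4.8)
The plan is to obtain the corollary immediately from the preceding Proposition, by first producing a manifestly permutation-symmetric ``seed'' vector and then hitting it with the central projection $\pr_{\mu}$.

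First I would introduce the vector obtained from $\ket{\varphi_{\mu}(x,y)}$ by replacing $P_{\mu}$ with the identity on $(\BC^{k})^{\otimes N}$,
\begin{align}
\ket{\varphi(x,y)}
:=
\sum_{i_{0},i_{1},\ldots,i_{N}=1}^{k}
\cE\ket{i_0 i_1 \ldots i_N}\otimes(x\otimes y^{\otimes N})\ket{i_0 i_1 \ldots i_N},
\nonumber
\end{align}
and observe that, after reorganizing the tensor factors so that the $r$-th auxiliary copy of $\BC^{k}$ sits next to the $r$-th copy of $\BC^{d_{B}}$, it factorizes as
\begin{align}
\ket{\varphi(x,y)}
=
\cE\Bigl(\textstyle\sum_{i_0}\ket{i_0}\otimes x\ket{i_0}\Bigr)
\otimes
\Bigl(\textstyle\sum_{i}\ket{i}\otimes y\ket{i}\Bigr)^{\otimes N}.
\nonumber
\end{align}
In this picture the $S_{N}$-action $\Delta(\pi)=\mr{Perm}^{k}(\pi)\otimes\mr{Perm}^{d}(\pi)$ simply permutes the $N$ identical Bob blocks $\sum_{i}\ket{i}\otimes y\ket{i}$ and fixes Alice's factor, so $\Delta(\pi)\ket{\varphi(x,y)}=\ket{\varphi(x,y)}$ for all $\pi\in S_{N}$; that is, the seed vector satisfies the hypothesis of the Proposition.

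Since $\pr_{\mu}$ is the central projection of $\BC[S_{N}]$ onto the $\mu$-isotypic component, it is in particular central in $\BC[S_{N}]$. Applying the Proposition with $z=\pr_{\mu}$ (acting on Bob's auxiliary register $(\BC^{k})^{\otimes N}$) and $\ket{\psi}=\ket{\varphi(x,y)}$ then yields that $(\pr_{\mu}\otimes\idm)\ket{\varphi(x,y)}$ again carries permutation symmetry. Finally, $(\pr_{\mu}\otimes\idm)\ket{\varphi(x,y)}=\ket{\varphi_{\mu}(x,y)}$ directly from the definitions, because in both expressions $\pr_{\mu}=P_{\mu}$ acts on exactly the auxiliary slots $i_{1},\ldots,i_{N}$; this proves the claim.

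There is essentially no genuine obstacle here — the content is entirely bookkeeping. The only point deserving care is \emph{where} $\pr_{\mu}$ is inserted: it must act on the $N$ auxiliary copies $(\BC^{k})^{\otimes N}$, not on the $N$ copies of $\BC^{d_{B}}$, so that the result matches the definition of $\ket{\varphi_{\mu}(x,y)}$; this is legitimate precisely because $\Delta(\pi)$ acts diagonally, so the Proposition (which only needs $z$ central in $\BC[S_{N}]$) applies verbatim to the auxiliary leg.
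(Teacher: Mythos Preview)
Your proposal is correct and follows exactly the route the paper intends: the corollary is stated immediately after the Proposition with only the remark ``Obviously'' as justification, and you have simply made that ``obviously'' explicit by exhibiting the seed vector $\ket{\varphi(x,y)}$, checking its $\Delta(\pi)$-invariance via the tensor-power factorization, and then invoking the Proposition with $z=\pr_{\mu}$. There is nothing to add; your bookkeeping remark about where $\pr_{\mu}$ acts is the only subtlety, and you handled it correctly.
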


\paragraph{Weak trace constraint}
Let us slightly relax the trace constraint in SDP defined in Definition~\ref{definition:ReducedSDP-Uk}.
\begin{definition}[Reduced SDP: free blocks] \label{definition:ReducedSDP-revised}
For a $\lambda \vdash (N+k-1)$ with $\ell(\lambda) \leq k$, we define
\begin{align}
&
\VSw_{\lambda}=\min_{\sigma_{\lambda} \geq 0} \tr (X_{\lambda} \sigma_{\lambda}), \\
&
\text{ subject to } \sigma_{\lambda}=\Delta(\pi) \sigma_{\lambda}=\sigma_{\lambda} \Delta(\pi), \: \forall \pi \in S_{N},
\: \text{ and }
\tr \sigma_{\lambda} \leq 1.
\end{align}
\end{definition}

\begin{proposition}
Comparing the SDPs defined in Definition~\ref{definition:ReducedSDP-Uk} with the one of Definition~\ref{definition:ReducedSDP-revised}, we have (i) $\VS_{\lambda} \geq 0$ iff $\VSw_{\lambda} = 0$ and (ii) $\VS_{\lambda} \leq 0$ iff $\VSw_{\lambda} = \VS_{\lambda}$.
\end{proposition}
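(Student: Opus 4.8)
The plan is to analyze the relationship between the SDP with equality trace constraint ($\VS_{\lambda}$) and the one with inequality constraint $\tr\sigma_\lambda \le 1$ ($\VSw_{\lambda}$), via a simple rescaling argument combined with monotonicity of the objective. First I would observe that any feasible $\rho_\lambda$ for the problem defining $\VS_\lambda$ is also feasible for $\VSw_\lambda$ (since $\tr\rho_\lambda = 1$ implies $\tr\rho_\lambda \le 1$, and the permutation-symmetry constraint and positivity are identical), so $\VSw_\lambda \le \VS_\lambda$ always. Conversely, given a feasible $\sigma_\lambda \ge 0$ with $\tr\sigma_\lambda \le 1$ that is permutation-symmetric, I would consider two cases depending on the sign of $\tr(X_\lambda\sigma_\lambda)$.

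The key observation is that $X_\lambda = k\Pi_k \otimes \pr_{\lambda^-}\otimes X \otimes \id_d^{\otimes(N-1)}$ is a fixed operator, so $\tr(X_\lambda(t\sigma_\lambda)) = t\,\tr(X_\lambda\sigma_\lambda)$ for any $t \ge 0$, and $t\sigma_\lambda$ remains positive semidefinite and permutation-symmetric. If $\sigma_\lambda \ne 0$, then $t := 1/\tr\sigma_\lambda \ge 1$ gives a trace-one state; if $\tr(X_\lambda\sigma_\lambda) \le 0$, then scaling up by $t \ge 1$ only makes the objective value smaller (more negative) or keeps it zero, which shows $\VS_\lambda \le \VSw_\lambda$ whenever $\VSw_\lambda \le 0$; combined with $\VSw_\lambda \le \VS_\lambda$ this gives statement (ii) in one direction. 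For the other direction of (ii), if $\VS_\lambda \le 0$ then the optimal $\rho_\lambda$ witnesses $\VSw_\lambda \le \VS_\lambda \le 0$, and the scaling argument upgrades any $\VSw_\lambda$-optimal $\sigma_\lambda$ to a trace-one state without increasing the objective, so $\VSw_\lambda \ge \VS_\lambda$, hence equality. For statement (i): if $\VS_\lambda \ge 0$, then since $\sigma_\lambda = 0$ is feasible for the relaxed problem with objective value $0$, we get $\VSw_\lambda \le 0$; but if some feasible $\sigma_\lambda$ achieved a strictly negative value, the scaling argument would produce a trace-one state with negative objective, contradicting $\VS_\lambda \ge 0$, so $\VSw_\lambda = 0$. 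Conversely if $\VSw_\lambda = 0$ then $\VS_\lambda \ge \VSw_\lambda = 0$ is immediate from $\VSw_\lambda \le \VS_\lambda$... wait, that inequality goes the wrong way; instead I would argue that if $\VS_\lambda < 0$ then by (ii)-direction just shown $\VSw_\lambda = \VS_\lambda < 0 \ne 0$, contrapositive gives $\VSw_\lambda = 0 \implies \VS_\lambda \ge 0$.

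I do not expect any serious obstacle here; the statement is essentially a bookkeeping exercise about how relaxing an equality constraint to an inequality interacts with a linear objective on a cone. The one point requiring a little care is the degenerate case $\sigma_\lambda = 0$ (which is feasible for the relaxed problem and contributes objective value $0$), since it cannot be rescaled to trace one — this is precisely why the relaxed optimum is $0$ rather than negative in the $k$-block-positive case, and it must be handled as a separate branch in the argument. I would also note that feasibility of the relaxed problem is never empty (the zero operator always works), so $\VSw_\lambda$ is well-defined and $\le 0$. Assembling these cases yields both equivalences (i) and (ii).
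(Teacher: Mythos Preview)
Your approach is correct and is essentially the same rescaling/renormalization argument the paper uses. One point worth flagging: you assert that ``the permutation-symmetry constraint and positivity are identical'' between the two problems, but Definition~\ref{definition:ReducedSDP-revised} (titled ``free blocks'') does \emph{not} require $\sigma_\lambda$ to be supported on the $\lambda$-block, whereas Definition~\ref{definition:ReducedSDP-Uk} does impose the confinement Eq.~\eqref{eq:ReducedSDP-Uk-LambdaConfinement}. The paper handles this extra discrepancy by first observing that $\Pi_k\otimes\pr_{\lambda^-}$ is invariant under $\pr_\lambda$, so $\tr(X_\lambda\sigma)=\tr\bigl(X_\lambda\,(\pr_\lambda\otimes\idm)\sigma(\pr_\lambda\otimes\idm)\bigr)$ with the projected state having trace at most that of $\sigma$; only after this projection does the renormalization land in the feasible set of $\VS_\lambda$. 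Once you insert this projection step, your argument goes through verbatim. Your use of $\sigma_\lambda=0$ to obtain $\VSw_\lambda\le 0$ is in fact cleaner than the paper's route via the kernel of $\pr_\lambda$.
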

\begin{proof}
We start by showing $\VS_{\lambda} \geq 0 \implies \VSw_{\lambda} = 0$ and $\VS_{\lambda} \leq 0 \implies \VSw_{\lambda} = \VS_{\lambda}$.
The inequality $\VSw_{\lambda} \leq \VS_{\lambda}$ is obvious. 
The element $\Pi_{k} \otimes \pr_{\lambda^{-}}$ for $X_{\lambda}$ in Eq.\eqref{eq:Uk-invariant-X} is invariant under the central projection of $\lambda$,
\begin{align}
\Pi_{k} \otimes \pr_{\lambda^{-}}=\pr_{\lambda} ( \Pi_{k} \otimes \pr_{\lambda^{-}} )=( \Pi_{k} \otimes \pr_{\lambda^{-}} ) \pr_{\lambda}.
\end{align}
Let $\sigma \geq 0$ be a Bosonic permutation symmetric state, namely, $\sigma=\Delta(\pi) \sigma=\sigma \Delta(\pi)$ for all $\pi \in S_{N}$ and $\tr \sigma=1$. Note that the following equality holds,
\begin{align}
\tr{(X_{\lambda} \sigma)}
=
\tr{( (\pr_{\lambda} \otimes \idm ) X_{\lambda} (\pr_{\lambda} \otimes \idm ) \sigma)}
=
\tr{( X_{\lambda} (\pr_{\lambda} \otimes \idm ) \sigma (\pr_{\lambda} \otimes \idm ))},
\end{align}
which defines a $\sigma_{\lambda}$ by $\sigma_{\lambda}=(\pr_{\lambda} \otimes \idm) \sigma (\pr_{\lambda} \otimes \idm)$ where $\tr \sigma_{\lambda} \leq 1$ holds due to H\"older inequality
\begin{align}
\tr \sigma_{\lambda}=\tr{( (\pr_{\lambda} \otimes \idm ) \sigma (\pr_{\lambda} \otimes \idm ) )}
\leq
\| \pr_{\lambda} \otimes \idm \|_{\infty} \| \sigma \|_{1}
=1.
\end{align}

On the one hand, since $\pr_{\lambda}$ has a nonzero kernel, it is clear that if $\VS_{\lambda} > 0$, there exists $\sigma$ such that $\sigma_{\lambda}=(\pr_{\lambda} \otimes \idm) \sigma (\pr_{\lambda} \otimes \idm)=0$ then $\tr{(X_{\lambda} \sigma_{\lambda})}=0$. On the other hand, if $\VS_{\lambda} \leq 0$, let us assume there exists some $\sigma_{\lambda}$ yielding $\tr{(X_{\lambda} \sigma_{\lambda})} < \VS_{\lambda}$. Then that $\sigma_{\lambda}$ can be renormalized by $\sigma_{\lambda} \mapsto \rho_{\lambda}=\sigma_{\lambda} / \tr \sigma_{\lambda}$ which yields a smaller value for the SDP in Definition~\ref{definition:ReducedSDP-Uk}, contradicting the assumption $\VSw_{\lambda} < \VS_{\lambda}$. Therefore, $\VS_{\lambda} \leq 0$ implies $\VS_{\lambda}=\VSw_{\lambda}$. The converse follows by the same argument.
\end{proof}




\section{Hierarchical SDPs on rectangular Young diagrams} \label{sec:SDP:RectangularYoungDiagrams}

This section aims to investigate the possibility of restricting the type of Young diagrams on which it remains sufficient to solve our SDP.




\subsection{Tensor power representation and Schur function}

This subsection follows the construction in Corollary~\ref{cor:mu-PermInvVec} and then provides the relation between the objective functions of the SDP defined in Definition~\ref{definition:ReducedSDP-revised} and of the optimization problem defined in Definition~\ref{definition:SDP-k-block-positivity}.

Since the feasible sets of our optimization and SDPs are convex, we can restrict our attention to pure states and the objective functions they yield.
\begin{proposition} \label{proposition:ObjectiveFunction-SchurFunction}
Let $\lambda \vdash_{k} (N+k-1)$, with $\ell(\lambda)=k$, specified by $\lambda=(\lambda_1,\ldots,\lambda_k)$, and $\lambda^{-}=(\lambda_1-1,\ldots,\lambda_k-1)$, and $\mu \vdash_{k} N$ satisfying $\mu \subset \lambda$ and $\mu \searrow \lambda^{-}$.
We consider a nonzero $\ket{\varphi_{\lambda^{-}}(x,y)} \in \BC^{kd_{A}} \otimes \bigoplus_{\mu : \mu \searrow \lambda^{-}} \mr{Sym}^{\mu} (kd_{B})$ defined in Corollary~\ref{cor:mu-PermInvVec}, which yields the objective function of the SDP in Definition~\ref{definition:ReducedSDP-revised}:
\begin{align}
&
f_{\lambda}(x,y)
:=
\frac{\scp{\varphi_{\lambda^{-}}(x,y)}{X_{\lambda}}{\varphi_{\lambda^{-}}(x,y)}}{\inner{\varphi_{\lambda^{-}}(x,y)}{\varphi_{\lambda^{-}}(x,y)}}
=
f(x,y)
\frac{\dim \BY_{\lambda^{-}} s_{\lambda^{-}} (\eta_1 , \ldots , \eta_k) \tr (y^{\dagger} y)
}{\sum_{\mu : \mu \searrow \lambda^{-}} \dim \BY_{\mu} s_{\mu} (\eta_1 , \ldots , \eta_k)},
\label{eq:ObjectiveFunction-SchurFunction}
\\
&
\text{where }
f(x,y)
:=
\frac{\scp{\phi_{k}}{(x \otimes y)^{\dagger} X (x\otimes y)}{\phi_{k}}}{\tr(x^{\dagger} x)\tr(y^{\dagger} y)},
\label{eq:ObjectiveFunction-kExtension:f}
\end{align}
where $s_{\lambda^{-}}$ and $s_{\mu}$ are the Schur functions associated with $\lambda^{-}$ and $\mu$ respectively, whose arguments $\eta_{1} , \ldots , \eta_{k}$ are eigenvalues of $\eta=y^{\dagger} y$, i.e., $\eta=U^{\dagger} \mr{diag}(\eta_1 , \ldots , \eta_k) U$ for some $U \in \U(k)$.

Then, the function $f(x,y)$ is the objective function of the optimization problem defined in Definition~\ref{definition:SDP-k-block-positivity} on pure states defined in Eq.\eqref{eq:PureStateKraus-xy} in Proposition~\ref{Proposition:PureStateKraus-xy} under the setting $\tilde{x} := x / \sqrt{\tr (x^{\dagger} x)}$ and $\tilde{y} := y / \sqrt{\tr (y^{\dagger} y)}$ with $\tr(\tilde{x}^{\dagger} \tilde{x})\tr(\tilde{y}^{\dagger} \tilde{y})=1$.
\end{proposition}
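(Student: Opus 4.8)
The plan is to compute the Rayleigh quotient $f_{\lambda\mu}(x,y) = \scp{\varphi_\mu(x,y)}{X_\lambda}{\varphi_\mu(x,y)} / \inner{\varphi_\mu(x,y)}{\varphi_\mu(x,y)}$ directly, separating the auxiliary ($\U(k)$) part from the original ($\BC^{d_A}\otimes\BC^{d_B}$) part. First I would expand the numerator. Writing $X_\lambda = k\Pi_k \otimes \pr_{\lambda^-} \otimes X \otimes \id_d^{\otimes(N-1)}$ and $\ket{\varphi_\mu(x,y)} = \sum_{i_0,\ldots,i_N} (\cE\otimes P_\mu)\ket{i_0\ldots i_N}\otimes(x\otimes y^{\otimes N})\ket{i_0\ldots i_N}$, the action of $k\cE^\dagger\Pi_k\cE = \ketbra{\phi_k}{\phi_k}$ (from Eq.~\eqref{eq:Dualization-MaximallyEntangledProj}) collapses the $i_0$ index together with one of Bob's legs into the maximally entangled vector $\ket{\phi_k}$, producing a factor $\scp{\phi_k}{(x\otimes y)^\dagger X (x\otimes y)}{\phi_k}$ on the "active" pair of systems. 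The remaining $N-1$ of Bob's legs are contracted against the projector $\pr_{\lambda^-}$ on the remaining auxiliary tensor legs, with $P_\mu$ inserted on both sides. The key combinatorial input is that $P_\mu P_{\lambda^-} P_\mu$ (more precisely the relevant composition after dualization) is governed by the branching $\mu \searrow \lambda^-$ — $\lambda$ appears in $(1^{k-1})\otimes\mu$ — which is why only the Schur functions $s_{\lambda^-}$ and $s_\mu$ survive.

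Next I would identify the scalar factors. The trace of $y^{\otimes N}$-type operators against a central projection $\pr_\nu$ on $(\BC^k)^{\otimes|\nu|}$, evaluated through $\eta = y^\dagger y$, yields $\dim\BY_\nu \cdot s_\nu(\eta_1,\ldots,\eta_k)$ — this is the standard fact that $\tr_{(\BC^k)^{\otimes m}}(\pr_\nu\, A^{\otimes m}) = (\dim \BY_\nu)\, s_\nu(\text{eigenvalues of }A)$, combining the Schur–Weyl decomposition with the Weyl character/Schur-polynomial formula. Applying this to $\nu = \mu$ (for the norm $\inner{\varphi_\mu}{\varphi_\mu}$, which involves $N$ copies of $y$) and to $\nu = \lambda^-$ (for the numerator, which after extracting $\ket{\phi_k}$ involves $N-1$ copies of $y$ on the spectator legs, with the column-stripping $\lambda \mapsto \lambda^-$ handled by Eq.~\eqref{eq:ProjectionEquality-klambda-}), and tracking one leftover $\tr(y^\dagger y)$ from the leg that pairs with $\ket{\phi_k}$, gives exactly the ratio $\dim\BY_{\lambda^-}\, s_{\lambda^-}(\eta)\,\tr(y^\dagger y) / (\dim\BY_\mu\, s_\mu(\eta))$. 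The normalization bookkeeping that converts $f(x,y)$ into the properly normalized objective of Definition~\ref{definition:SDP-k-block-positivity} is then the substitution $\tilde x = x/\sqrt{\tr(x^\dagger x)}$, $\tilde y = y/\sqrt{\tr(y^\dagger y)}$, which by homogeneity leaves $f(x,y)$ unchanged while enforcing $\tr(\tilde x^\dagger\tilde x)\tr(\tilde y^\dagger\tilde y)=1$; this is immediate from the form of Eq.~\eqref{eq:ObjectiveFunction-kExtension:f} and the representation in Proposition~\ref{Proposition:PureStateKraus-xy}.

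I expect the main obstacle to be the careful disentangling of the dualization map $\cE$ from the permutation/projection structure: one must check that $\cE$ acting on Alice's $k-1$ auxiliary legs, together with $\Pi_k$ and the central projection $\pr_\lambda$, reproduces the column-appending $(1^{k-1})\otimes\mu \supseteq \lambda$ and cleanly factors as the product of the scalar $\scp{\phi_k}{\cdots}{\phi_k}$ and the spectator $\pr_{\lambda^-}$ contribution — i.e. that the contraction does not mix the "active" pair with the spectator legs. This is where Eq.~\eqref{eq:ProjectionEquality-klambda-}, $\Pi_k\otimes\pr_{\lambda^-} = \Pi_k\pr_\lambda\Pi_k$, does the essential work, and where the hypothesis $\mu\searrow\lambda^-$ (equivalently $\lambda$ occurring in $(1^{k-1})\otimes\mu$) must be invoked to guarantee the relevant intertwiner space is one-dimensional so that a single Schur-function ratio appears rather than a sum. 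Once that structural fact is in place, the rest is the routine Schur–Weyl/character computation sketched above.
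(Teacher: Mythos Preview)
Your proposal is correct and follows essentially the same route as the paper: compute the denominator as $\tr(x^\dagger x)\,\tr(\pr_\mu\,\eta^{\otimes N}) = \tr(x^\dagger x)\,\dim\BY_\mu\, s_\mu(\eta)$, then use $k\cE^\dagger\Pi_k\cE=\ketbra{\phi_k}{\phi_k}$ together with the projector identity $(\pr_{(1^{k-1})}\otimes\pr_\mu)(k\Pi_k\otimes\pr_{\lambda^-})(\pr_{(1^{k-1})}\otimes\pr_\mu)=k\Pi_k\otimes\pr_{\lambda^-}$ (valid because $\lambda$ occurs in $(1^{k-1})\otimes\mu$) to factor the numerator into $\scp{\phi_k}{(x\otimes y)^\dagger X(x\otimes y)}{\phi_k}\cdot\dim\BY_{\lambda^-}\,s_{\lambda^-}(\eta)$. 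The only cosmetic difference is that the paper derives $\tr(\pr_\nu\,\eta^{\otimes m})=\dim\BY_\nu\, s_\nu(\eta)$ via the Frobenius character formula through power sums, whereas you invoke it directly as a Schur--Weyl/Weyl-character fact; also note that the extra $\tr(y^\dagger y)$ in the final formula is not a ``leftover'' from the active leg but simply cancels the denominator built into the definition of $f(x,y)$.
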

\begin{proof}
We begin with computing $\inner{\varphi_{\lambda^{-}}(x,y)}{\varphi_{\lambda^{-}}(x,y)}$ obtains
\begin{align}
\inner{\varphi_{\lambda^{-}}(x,y)}{\varphi_{\lambda^{-}}(x,y)}
&=
\sum_{\mu : \mu \searrow \lambda^{-}}
\tr(x^{\dagger} x) \scp{j_{1} \ldots j_{N}}{\pr_{\mu}}{i_{1} \ldots i_{N}} \scp{j_{1} \ldots j_{N}}{ (y^{\dagger} y)^{\otimes N}}{i_{1} \ldots i_{N}},
\\
&=
\tr(x^{\dagger} x)
\sum_{\mu : \mu \searrow \lambda^{-}}
\tr( \pr_{\mu} \eta^{\otimes N} )
\\
&=
\tr(x^{\dagger} x)
\sum_{\mu : \mu \searrow \lambda^{-}}
\dim \BY_{\mu} s_{\mu} (\eta_1 , \ldots , \eta_k).
\end{align}
The calculation for the $\tr ( \pr_{\mu} \eta^{\otimes N} )$ relies on the following expression of the central projector $\pr_{\mu}$,
\begin{align}
\pr_{\mu}=\frac{\dim \BY_{\mu}}{N!} \sum_{\pi \in S_{N}} \chi_{\mu}(\pi) \pi,
\end{align}
and then the formula
\begin{align}
\tr ( \pi \eta^{\otimes N} )=\tr ( \pi (U^{\dagger} \mr{diag}(\eta_1 , \ldots , \eta_k) U)^{\otimes N} )
=
\tr ( \pi \mr{diag}(\eta_1 , \ldots , \eta_k)^{\otimes N} )
=
p_{c(\pi)}(\eta_1 , \ldots , \eta_k),
\end{align}
where $p_{c(\pi)}$ is the power sum
\footnote{Given an integer $l \geq 1$, power sum $p_{l}$ is defined to be $p(x_1,\ldots,x_n)=\sum_{i=1}^{l} x_{i}^{l}$. Let $S_{N} \ni \pi=\pi_1 \cdots \pi_m$ then $p_{c(\pi)}:=p_{| \pi_1 |} \cdots p_{| \pi_m |}$ where $| \sigma |$ is the length of permutation $\sigma \in S_{N}$.}
for cycle type $c(\pi)$ of $\pi \in S_{N}$. Then using Frobenius' formula \cite{Macdonald1998SymmetricFun}
\begin{align}
s_{\mu}=\sum_{c(\pi) \in \mathrm{Par}(N)} \frac{1}{z_{c(\pi)}} \chi_{\mu}( c(\pi) ) p_{c(\pi)}, \:
\text{ where }
z_{c(\pi)}=\frac{n!}{1^{m_{1}}m_{1}! 2^{m_{2}}m_{2}! \cdots n^{m_{n}} m_{n}! }
\end{align}
we are able to go from the power sum to Schur function, where $z_{c(\pi)}$ is the cardinality of the conjugacy class to which $c(\pi)$ belongs (Macdonald’s notation).

Then we compute $\scp{\varphi_{\lambda^{-}}(x,y)}{X_{\lambda}}{\varphi_{\lambda^{-}}(x,y)}$. Note that $(\cE^{\dagger} \otimes \sum_{\mu : \mu \searrow \lambda^{-}} \pr_{\mu}) \cdot (k\Pi_{k} \otimes \pr_{\lambda^{-}}) \cdot (\cE \otimes \sum_{\mu' : \mu' \searrow \lambda^{-}} \pr_{\mu'})=\ketbra{\phi_{k}}{\phi_{k}} \otimes \pr_{\lambda^{-}}$ since the following equality holds \footnote{This can be shown by using the nested decomposition of Hermitian Young projectors \cite{Alcock-Zeilinger:2016sxc}.}:
\begin{align}
(\pr_{(1^{k-1})} \otimes \sum_{\mu : \mu \searrow \lambda^{-}} \pr_{\mu}) \cdot (k\Pi_{k} \otimes \pr_{\lambda^{-}}) \cdot (\pr_{(1^{k-1})} \otimes \sum_{\mu' : \mu' \searrow \lambda^{-}} \pr_{\mu'})
=
k \Pi_{k} \otimes \pr_{\lambda^{-}}.
\end{align}
Then by a similar calculation for $\inner{\varphi_{\lambda^{-}}(x,y)}{\varphi_{\lambda^{-}}(x,y)}$, we obtain $\scp{\varphi_{\lambda^{-}}(x,y)}{X_{\lambda}}{\varphi_{\lambda^{-}}(x,y)}$ by using $\tr (\pr_{\lambda^{-}} \eta^{\otimes (N-1)})=\dim \BY_{\lambda^{-}} s_{\lambda^{-}} (\eta_1 , \ldots , \eta_k)$. The proof follows from combining these results.
\end{proof}




\subsection{Choice of Young diagrams: Rectangular shape scheme} \label{subsec:RectangleYoungDiagrams-Schur}

This subsection addresses the question of economical scheme. By establishing the hierarchical SDPs on the rectangular shape Young diagram and using the relation Eq.\eqref{eq:ObjectiveFunction-SchurFunction} in Proposition~\ref{proposition:ObjectiveFunction-SchurFunction}, we can derive a bound Eq.\eqref{eq:Bound-rectangularYoung} and then proceed with the analysis of the validity on the rectangular shape scheme.

Recall that the sequence of inequalities Eq.\eqref{eq:sequence-SDPminimalvalue} holds for the hierarchical SDPs in Definition~\ref{definition:SDP-k-block-positivity-N-BSE}. We can extract the levels $N=kn-k+1$ from the sequence Eq.\eqref{eq:sequence-SDPminimalvalue} such that
\begin{align}
\VS_{1} \leq \VS_{k+1} \leq \cdots \leq \VS_{(n-1)k+1} \leq \VS_{nk+1} \leq \cdots \leq \VS_{\infty k+1} = \V_{k}.
\label{eq:sequence-SDPminimalvalue-rectangular}
\end{align}
The specified level $N=(n-1)k+1$ is extracted for the reason that it admits $k$-row rectangular shape Young diagram $(n^{k})$. We then consider the corresponding SDP based on rectangular shape as defined in Definition~\ref{definition:ReducedSDP-Uk}, namely, $\VS_{(n^{k})}$. There may be sequence of inequalities $\VS_{(1^{k})} \leq \VS_{(2^{k})} \leq \cdots \leq \VS_{(n^{k})} \leq \cdots$, but it is not necessary for the proceeding analysis. What we need is the following inequality:
\begin{align}
\VS_{nk-k+1} \leq \VSw_{(n^{k})} \leq \frac{n+k-1}{k(kn-k+1)} \V_{k}.
\end{align}
\begin{proposition} \label{pro:SDPs-rectangularYoung}
The following relation between the minimal values of the SDPs defined in Definition~\ref{definition:SDP-k-block-positivity-N-BSE}, in Definition~\ref{definition:ReducedSDP-revised} and the optimization problem in Definition~\ref{definition:SDP-k-block-positivity} holds:
\begin{align}
\VS_{kn-k+1}
\leq
\VSw_{(n^{k})}
\leq
\frac{1}{k^2} \V_{k} \leq 0,
\label{eq:Bound-rectangularYoung}
\end{align}
Here, $\VS_{kn-k+1}$ is the minimal value of level $N=kn-k+1$ SDP defined in Definition~\ref{definition:SDP-k-block-positivity-N-BSE}. 
It implies
that rectangular shape scheme, i.e., defining hierarchical SDPs regarding Definition~\ref{definition:ReducedSDP-Uk} underlying the following sequential Young diagrams
\begin{align}
(1^{k}) , (2^{k}) , \ldots , (n^{k}) , \ldots ,
\nonumber
\end{align}
is enough for testing $k$-block-positivity. 
Concretely, let $X$ be the operator whose $k$-block-positivity one wants to test. Then,
\begin{itemize}
\item If there exists a $n \in \BN$ such that $\VS_{(n^{k})} \geq 0$, then it implies $\V_{k}=0$, thus $X$ is certified as $k$-block-positive.
\item If $\VS_{(n^{k})}<0$ for any $n \in \BN$, then it implies $\V_{k}<0$ thus $X$ is not $k$-block-positive.
\item If $\VS_{(n^{k})} \to 0$ as $n \to \infty$, the it implies $\V_{k}=0$ thus $X$ is $k$-block-positive.
\end{itemize}
\end{proposition}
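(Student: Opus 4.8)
The plan is to establish the chain of inequalities \eqref{eq:Bound-rectangularYoung} and then read off the three bullet-point consequences as corollaries, noting that by the earlier proposition relating $\VSw_{(n^k)}$ and $\VS_{(n^k)}$ one may freely switch between the weak-trace and equality-trace formulations. The left inequality $\VS_{kn-k+1} \leq \VSw_{(n^k)}$ should follow directly from \eqref{eq:VSN-minVSlambda} in Definition~\ref{definition:ReducedSDP-Uk}: since $(n^k)$ is one admissible $k$-row Young diagram at level $N = kn-k+1$, we have $\VS_{kn-k+1} = \min_\lambda \VS_\lambda \leq \VS_{(n^k)}$, and combined with $\VSw_{(n^k)} = \VS_{(n^k)}$ (when these are nonpositive, which is the only case of interest) this gives the first bound. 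The nonpositivity $\VSw_{(n^k)} \leq 0$ is immediate since $\sigma_\lambda = 0$ is feasible for the relaxed SDP of Definition~\ref{definition:ReducedSDP-revised}.

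The substance of the argument is the middle inequality $\VSw_{(n^k)} \leq \frac{1}{k^2}\V_k$. Here I would use Proposition~\ref{proposition:ObjectiveFunction-SchurFunction} with $\lambda = (n^k)$, for which $\lambda^{-} = ((n-1)^k)$, and choose $\mu$ so that $\mu \searrow \lambda^{-}$ and $\lambda$ appears in $(1^{k-1})\otimes\mu$; the natural choice is $\mu = ((n-1)^{k-1}, n)$ or a similar near-rectangular shape obtained by adding one box to $((n-1)^k)$. Feeding an arbitrary optimal (or near-optimal) pair $(x,y)$ for the optimization problem of Definition~\ref{definition:SDP-k-block-positivity} into $\ket{\varphi_\mu(x,y)}$, formula \eqref{eq:ObjectiveFunction-SchurFunction} expresses $f_{\lambda\mu}(x,y)$ as $f(x,y)$ times the ratio
\begin{align}
\frac{\dim \BY_{\lambda^{-}}\, s_{\lambda^{-}}(\eta_1,\ldots,\eta_k)\, \tr(y^\dagger y)}{\dim \BY_{\mu}\, s_{\mu}(\eta_1,\ldots,\eta_k)}.
\nonumber
\end{align}
One may assume $\tr(y^\dagger y) = 1$ after rescaling, so the task reduces to bounding this ratio of (dimension $\times$ Schur function) quantities from above by $\tfrac{1}{k^2}$ uniformly over the eigenvalue vector $(\eta_1,\ldots,\eta_k)$ of a density-matrix-like $\eta$ with $\tr\eta = 1$. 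For the rectangular shape this should be tractable: $s_{((n-1)^k)}(\eta) = (\eta_1\cdots\eta_k)^{n-1}$ and $s_\mu$ for $\mu$ a rectangle-plus-one-box factors similarly, so the Schur-function ratio simplifies to an elementary symmetric-function expression; the dimension ratio $\dim\BY_{\lambda^{-}}/\dim\BY_{\mu}$ is a hook-length quotient that one evaluates explicitly. The worst case over $\eta$ with $\tr\eta=1$ will be at the uniform point $\eta_i = 1/k$ (or at a boundary, which must be checked), and the resulting constant should come out to exactly $\tfrac{1}{k^2}$ — matching the weaker constant $\tfrac{n+k-1}{k(kn-k+1)}$ appearing in the displayed intermediate bound, which is $\leq \tfrac{1}{k^2}$ precisely when $n+k-1 \leq \tfrac{kn-k+1}{k}$... so in fact one wants $\tfrac{1}{k^2}$ as the clean uniform bound and the $n$-dependent expression as the sharp per-level value.

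Once \eqref{eq:Bound-rectangularYoung} is in hand, the three bullets follow formally. If $\VS_{(n^k)}\geq 0$ for some $n$, then since $\VS_{(n^k)} = \VSw_{(n^k)}$ would be forced to be $0$ (it is $\leq 0$ always, and $\geq 0$ by hypothesis), the middle inequality gives $0 \leq \tfrac{1}{k^2}\V_k \leq 0$, hence $\V_k = 0$ and $X$ is $k$-block-positive by Definition~\ref{definition:SDP-k-block-positivity} and Proposition~\ref{proposition:Bound-SNK-SN1-SNk}. If $\VS_{(n^k)} < 0$ for every $n$, then $\VSw_{(n^k)} = \VS_{(n^k)} < 0$, and the left inequality gives $\VS_{kn-k+1} < 0$ for a cofinal set of levels $N$; since $\VS_N \to \V_k$ monotonically by \eqref{eq:sequence-SDPminimalvalue}, this forces $\V_k < 0$ (indeed $\VS_{kn-k+1} \leq \V_k$ already suffices only for $\V_k \geq \VS_{kn-k+1}$, so one actually uses $\VSw_{(n^k)} \leq \tfrac{1}{k^2}\V_k$ together with $\VS_{(n^k)} < 0$ to conclude $\V_k < 0$), hence $X$ is not $k$-block-positive. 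The third bullet is the limiting case: if $\VS_{(n^k)}\to 0$, then each $\VSw_{(n^k)}\to 0$, so $\tfrac{1}{k^2}\V_k \geq \VSw_{(n^k)}$ combined with $\V_k \leq 0$ pins $\V_k = 0$.

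\textbf{Main obstacle.} The delicate point is the uniform estimate of the Schur-function-times-dimension ratio by $\tfrac{1}{k^2}$ over all valid eigenvalue vectors $\eta$; one must verify that the extremum is attained at the uniform point (and not at a degenerate boundary where some $\eta_i\to 0$, which kills the rectangular Schur function in a way that could be favorable or unfavorable depending on the dimension prefactors), and correctly match the explicit hook-length computation of $\dim\BY_{((n-1)^k)}/\dim\BY_{\mu}$. Getting the right $\mu$ — the one minimizing the ratio among all valid choices $\mu\searrow\lambda^{-}$ with $\lambda\subset(1^{k-1})\otimes\mu$ — and confirming the constant is genuinely $\tfrac{1}{k^2}$ (rather than merely $O(1/\mathrm{poly}(k))$) is where the real work lies.
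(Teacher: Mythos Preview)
Your overall strategy is correct --- use Proposition~\ref{proposition:ObjectiveFunction-SchurFunction} with $\lambda=(n^k)$, plug in an optimizer $(x_m,y_m)$ for $\V_k$, and read off the middle inequality --- but the ``main obstacle'' you flag is a phantom, and your sign bookkeeping is reversed.

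\textbf{The Schur-function ratio is identically $1$; no optimization over $\eta$ is needed.} For $\lambda=(n^k)$ one has $\lambda^-=((n-1)^k)$, and under the constraint $\ell(\mu)\le k$ there is a \emph{unique} $\mu$ with $\mu\searrow\lambda^-$, namely $\mu=(n,(n-1)^{k-1})$. By Pieri's rule,
\[
s_{\lambda^-}(\eta_1,\dots,\eta_k)\,p_1(\eta_1,\dots,\eta_k)=\sum_{\nu\searrow\lambda^-}s_\nu(\eta_1,\dots,\eta_k)=s_\mu(\eta_1,\dots,\eta_k),
\]
since the only other $\nu$ is $((n-1)^k,1)$, which has $k+1$ rows and hence vanishes in $k$ variables. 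Because $\tr(y^\dagger y)=p_1(\eta)$, the Schur-function part of the ratio in \eqref{eq:ObjectiveFunction-SchurFunction} cancels exactly, leaving only $\dim\BY_{\lambda^-}/\dim\BY_\mu$, independent of $\eta$. This is precisely why the rectangular shape is the right choice: it is the unique shape for which $\mu$ is forced and the Schur factors collapse. Your proposed optimization over $\eta$ (and the worry about boundary degenerations) never arises.

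\textbf{The direction of the final inequality.} A hook-length computation gives $\dim\BY_{((n-1)^k)}/\dim\BY_{(n,(n-1)^{k-1})}=\dfrac{n+k-1}{k(kn-k+1)}$. This ratio is \emph{at least} $\tfrac{1}{k^2}$ (equivalent to $k^2\ge 1$), not at most. Since $\V_k\le 0$, multiplying by the larger positive constant gives the \emph{smaller} (more negative) number, so
\[
\VSw_{(n^k)}\le \frac{n+k-1}{k(kn-k+1)}\,\V_k\le \frac{1}{k^2}\V_k\le 0.
\]
Your attempt to verify ``$\frac{n+k-1}{k(kn-k+1)}\le\frac{1}{k^2}$'' was headed the wrong way.

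\textbf{Minor points.} For the left inequality $\VS_{kn-k+1}\le\VSw_{(n^k)}$ you should also cover the case $\VS_{(n^k)}>0$: then $\VSw_{(n^k)}=0$, while $\VS_{kn-k+1}\le\V_k\le 0$ from \eqref{eq:sequence-SDPminimalvalue}, so the inequality still holds. For the second bullet point, your first attempt (via $\VS_{kn-k+1}<0$ and de Finetti) indeed fails as you noticed; the paper uses instead that in the limit $\VS_\infty=\V_k$ and $\VSw_{(\infty^k)}\le\tfrac{1}{k^2}\V_k$ squeeze $\VSw_{(\infty^k)}$ to have the same sign as $\V_k$.
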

\begin{proof}
Since $\ket{\varphi_{\lambda^{-}}(x,y)}$ in Corollary~\ref{cor:mu-PermInvVec} is a particular form belonging to the feasible set of the SDP in Definition~\ref{definition:ReducedSDP-revised}, we have $\VSw_{\lambda} \leq f_{\lambda}(x,y)$ for any $x \in \BM_{d_{A} \times k}(\BC)$ and $y \in \BM_{d_{B} \times k}(\BC)$.
Let us focus on the rectangular shape scheme and consider the level $n$ of Eq.\eqref{eq:sequence-SDPminimalvalue-rectangular} corresponding to $\lambda=(n^{k})$, then we get $\lambda^{-}=((n-1)^{k})$ and $\mu=(n,(n-1)^{k-1})$ which is the unique $\mu \searrow \lambda^{-}$ under the requirement $\ell(\mu) \leq k$. Revisiting Eq.\eqref{eq:ObjectiveFunction-SchurFunction}, we use Pieri’s formula to deal with
\begin{align}
s_{\lambda^{-}} (\eta_1 , \ldots , \eta_k) \tr (y^{\dagger} y)
=
s_{\lambda^{-}} (\eta_1 , \ldots , \eta_k) p_{1}(\eta_1 , \ldots , \eta_k)=\sum_{\nu : \nu \searrow \lambda^{-}} s_{\nu}
=
s_{\mu}(\eta_1 , \ldots , \eta_k),
\end{align}
where $p_{1}(\eta_1,\ldots,\eta_k)=\sum_{i=1}^{k} \eta_i=\tr \eta$, and $\nu$ is either $\nu=\mu$ or $\nu=((n-1)^{k},1)$ which vanishes since it has rows more than $k$. We obtain
\begin{align}
&
f_{(n^k) , (n,(n-1)^{k-1})}(x,y)
=
\frac{\scp{\varphi_{(n,(n-1)^{k-1})}}{X_{(n^{k})}}{\varphi_{(n,(n-1)^{k-1})}}
}{\inner{\varphi_{(n,(n-1)^{k-1})}}{\varphi_{(n,(n-1)^{k-1})}}}
=
\frac{\dim \BY_{\lambda^{-}}}{\dim \BY_{\mu}} f(x,y),
\label{eq:ObjectiveFunction-SchurFunction-rec}
\end{align}
where the dimensions $\dim \BY_{\lambda^{-}}$ and $\dim \BY_{\mu}$ are computed by hook length formula,
\begin{align}
&
\dim \BY_{((n-1)^k)}=(kn-k)! \prod_{r=1}^{k} \frac{(k-r)!}{(n+k-r-1)!}, \\
&
\dim \BY_{(n,(n-1)^{k-1})}=\frac{k(kn-k+1)!}{n+k-1} \prod_{r=1}^{k} \frac{(k-r)!}{(n+k-r-1)!}.
\end{align}
Suppose that $x_{m} \in \BM_{d_{A} \times k}(\BC) , y_{m} \in \BM_{d_{B} \times k}(\BC)$ minimize $f(x,y)$, i.e., $f(x_{m},y_{m})=\V_{k}$ the solution for the optimization problem in Definition~\ref{definition:SDP-k-block-positivity}, then we have
\begin{align}
\VSw_{(n^{k})}
\leq
f_{(n^k) , (n,(n-1)^{k-1})}(x_{m},y_{m})
=
\frac{n+k-1}{k(kn-k+1)} \V_{k}.
\end{align}
Now we consider two situations separately:
\begin{itemize}
\item If $\VS_{(n^{k})}>0$ then $\VSw_{(n^{k})}=0$, thus above inequality together with $\V_{k} \leq 0$ implies $\V_{k}=0$. Meanwhile, due to sequence Eq.\eqref{eq:sequence-SDPminimalvalue}, $\VS_{kn-k+1} \leq 0$ holds. For this situation, we have
\begin{align}
\VS_{kn-k+1} \leq \VSw_{(n^{k})}=\V_{k}.
\end{align}
\item If $\VS_{(n^{k})} \leq 0$ then $\VS_{(n^{k})}=\VSw_{(n^{k})}$, meanwhile $\VS_{kn-k+1} \leq \VS_{(n^{k})}$ as Eq.\eqref{eq:VSN-minVSlambda}. Hence we have
\begin{align}
\VS_{kn-k+1} \leq \VS_{(n^{k})} = \VSw_{(n^{k})} = \frac{n+k-1}{k(kn-k+1)} \V_{k}.
\end{align}
\end{itemize}
Both situations satisfy the following inequality:
\begin{align}
\VS_{kn-k+1} \leq \VSw_{(n^{k})} \leq \frac{n+k-1}{k(kn-k+1)} \V_{k} \leq \frac{1}{k^2} \V_{k} \leq 0.
\label{eq:Bound-rectangularYoung}
\end{align}
From the inequality one can deduce: (i) if $\VS_{(n^k)} \geq 0$ then $\VSw_{(n^k)}=0$ thus $\V_{k}=0$; (ii) if $\VS_{(n^k)} < 0$ then $\VSw_{(n^k)}=\VS_{(n^k)}$ thus at infinite level we have $\VS_{\infty}=\V_{k} \leq \VS_{(\infty^{k})} \leq \frac{1}{k^2} \V_{k}$, which shows that $\VS_{(\infty^{k})}$ has the same sign with $\V_{k}$. More explicitly, $\VS_{(n^{k})} \to c $ indicates no $k$-block-positivity if $c < 0$, while it indicates $k$-block-positivity if $c=0$.
\end{proof}
\begin{remark}
One may consider the situation based on other $\lambda$. Assume in Eq.\eqref{eq:ObjectiveFunction-SchurFunction} $\dim \BY_{\lambda^{-}} / \dim \BY_{\mu} \to 0$ as $N \to \infty$, then for the situation $\VS_{\lambda} < 0$, the decision on whether $\V_{k}<0$ or $\V_{k}=0$ remains indeterminate, because imitating above analysis one would only get $\VS_{\infty} \leq \VS_{\lambda} \leq 0$, no $\V_{k}$ from the right side, then $\V_{k}<0$ is still possible from $\VS_{\lambda} = 0$.
\end{remark}

\medskip

This section has shown why establishing hierarchical SDPs on rectangular shape is enough for testing $k$-block-positivity. It follows from the fact that the minimal value of the objective function of the reduced SDP in Definition~\ref{definition:ReducedSDP-revised} under rectangular shape is bounded from bottom by the minimal value of the objective function of SDPs defined in Definition~\ref{definition:SDP-k-block-positivity-N-BSE} at the level $N=kn-k+1$ and from the top by the minimal value of the optimization problem in Definition~\ref{definition:SDP-k-block-positivity} up to a nonzero factor.




\section{Complexity of the hierarchical SDPs on rectangular scheme} \label{sec:ComplexitySDP}




\subsection{Complexity of reduced SDP indexed by Young diagram} \label{subsec:Complexity-YoungDiagram}

This subsection aims to determine the size of the SDP variable – the block matrix $\rho_{\lambda}$ in Eq.\eqref{eq:Uk-invariant-rho}) for the SDP defined in Definition~\ref{definition:ReducedSDP-Uk}, by fully utilizing the Bosonic permutation symmetry. The size serves as a measure of the SDP’s complexity and is shown to be related to the dimension of the restricted representation of $\U(d)$.

Consider $\lambda \vdash_{k} (N+k-1)$. In Eq.\eqref{eq:Uk-invariant-rho} we can write $\rho_{p_{\lambda},p'_{\lambda}}=R_{\lambda}(\ketbra{p_{\lambda}}{p'_{\lambda}})$ with a
linear map $R_{\lambda} : \mathbb{M}_{\dim \mathbb{Y}_{\lambda}}(\mathbb{C}) \to \mathbb{M}_{d_{A}d_{B}^{N}}(\mathbb{C})$. 
Then by $\rho_{\lambda} \geq 0$,  $\rho_{\lambda}$ can be represented as
\begin{align}
\rho_{\lambda}
=
\frac{\id_{\BU^{k}_{\lambda}}}{\dim \BU^{k}_{\lambda}} \otimes \sum_{p_{\lambda} , p'_{\lambda}}
\cE \cE^{\dagger}
\ketbra{p_{\lambda}}{p'_{\lambda}}
\cE \cE^{\dagger}
\otimes \sum_{\alpha} K_{\lambda, \alpha} \ketbra{p_{\lambda}}{p'_{\lambda}} K_{\lambda, \alpha}^{\dagger},
\label{eq:rhoYoungDiagram-KrausOperators}
\end{align}
where the Kraus operators $K_{\lambda, \alpha} : \BC^{\dim Y_{\lambda}} \to \BC^{d_{A}} \otimes (\BC^{d_{B}})^{\otimes N}$ should be exchangeable, i.e., satisfy $K_{\lambda, \alpha}=\pi K_{\lambda, \alpha} \pi^{-1}$ for all $\pi \in S_{N}$ since $\rho_{\lambda}$ carries Bosonic permutation symmetry.
Hence, the exchangeable Kraus operators admit the following decomposition,
\begin{align}
K_{\lambda, \alpha} \in
\mathrm{Hom}_{S_{N}} ( \BY_{\lambda} , \BC^{d_{A}} \otimes (\BC^{d_{B}})^{\otimes N} )
&
\cong
\BC^{d_{A}}
\otimes
\bigoplus_{\mu \subset \lambda : \mu \searrow \lambda^{-}} \mathrm{Hom}_{S_{N}} ( \BY_{\mu} , (\BC^{d_{B}})^{\otimes N} )
\\
&
\cong
\BC^{d_{A}} \otimes \bigoplus_{\mu \subset \lambda : \mu \searrow \lambda^{-}} \BU^{d}_{\mu},
\end{align}
where the last step follows by Schur-Weyl duality.

\medskip

Every $\rho_{\lambda}$ can be generated by $\{ K_{\lambda, \alpha} : \BC^{\dim Y_{\lambda}} \to \BC^{d_{A}} \otimes (\BC^{d_{B}})^{\otimes N} \}$ through convex linear combination as Eq.\eqref{eq:rhoYoungDiagram-KrausOperators},
where the vector $\sum_{p_{\lambda}} \ket{p_{\lambda}} \otimes \ket{p_{\lambda}}$ servers as the initial state on which Kraus operators act.
Hence the dimension of $\mathrm{Hom}_{S_{N}} ( \BY_{\lambda} , \BC^{d_{A}} \otimes (\BC^{d_{B}})^{\otimes N} )$ provides a characterization for the size of SDP variables thus the complexity of the computational resource used in the SDP defined in Definition~\ref{definition:ReducedSDP-Uk}.
\begin{proposition} \label{proposition:SDPComplexity-lambda}
The complexity of the SDP in Definition~\ref{definition:ReducedSDP-Uk} is characterized by
\begin{align}
\cC_{\lambda}
:=
\dim \mathrm{Hom}_{S_{N}} ( \BY_{\lambda} , \BC^{d_{A}} \otimes (\BC^{d_{B}})^{\otimes N} )
=
d \sum_{\mu \subset \lambda : \mu \searrow \lambda^{-}} \dim \BU^{d}_{\mu}.
\end{align}
\end{proposition}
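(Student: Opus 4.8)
The plan is to compute $\dim \mathrm{Hom}_{S_{N}}(\BY_{\lambda}, \BC^{d_{A}} \otimes (\BC^{d_{B}})^{\otimes N})$ by peeling off the two tensor factors one at a time. First I would note that $\BC^{d_{A}}$ carries no $S_{N}$-action (it is Alice's single system, untouched by the Bosonic permutations), so it factors out of the $\mathrm{Hom}_{S_{N}}$ as a plain tensor factor, contributing a multiplicative $\dim \BC^{d_{A}} = d$. This is exactly the first isomorphism displayed just before the proposition statement; it is essentially a triviality but worth spelling out, since it is where the leading factor $d$ in the formula comes from.

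Next I would analyze $\mathrm{Hom}_{S_{N}}(\BY_{\lambda}, (\BC^{d_{B}})^{\otimes N})$. Here $\BY_{\lambda}$ is the Specht module of $S_{N}$ — but $\lambda \vdash (N+k-1)$, so strictly speaking we need to be careful: the relevant $S_{N}$-module is not $\BY_{\lambda}$ itself but its branching to $S_{N}$. By the branching rule for the symmetric group, restricting $\BY_{\lambda}$ from $S_{N+k-1}$ down — or rather, in the way it actually enters via Eq.\eqref{eq:Uk-invariant-rho} and the $\cE\cE^{\dagger}$ projectors, the $S_{N}$-content of the relevant space is $\bigoplus_{\mu}\BY_{\mu}$ where $\mu$ ranges over Young diagrams of $N$ boxes obtained appropriately from $\lambda$. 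The paper has already identified this index set as $\{\mu \subset \lambda : \mu \searrow \lambda^{-}\}$ — equivalently, $\mu$ is obtained from $\lambda^{-}$ (which has $N - k + 1$ boxes... wait, $\lambda$ has $N+k-1$ boxes and $\ell(\lambda) = k$, so $\lambda^{-} = (\lambda_1 - 1, \ldots, \lambda_k - 1)$ has $N - 1$ boxes) by adding one box. So $\mu \vdash N$, and the multiplicity of $\BY_{\mu}$ inside the relevant $S_{N}$-representation carried by $\BY_{\lambda}$ is exactly $1$ for each such $\mu$, by the Pieri/Littlewood–Richardson branching structure underlying Eq.\eqref{eq:ProjectionEquality-klambda-}. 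I would cite the earlier decomposition $K_{\lambda,\alpha} \in \BC^{d_A} \otimes \bigoplus_{\mu \subset \lambda : \mu \searrow \lambda^-} \mathrm{Hom}_{S_N}(\BY_\mu, (\BC^{d_B})^{\otimes N})$ as the combinatorial input, so this step reduces to assembling already-established facts.

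Then I would apply Schur–Weyl duality: $\dim \mathrm{Hom}_{S_{N}}(\BY_{\mu}, (\BC^{d_{B}})^{\otimes N}) = \dim \BU^{d}_{\mu}$, the dimension of the irreducible polynomial representation of $\U(d_{B}) = \U(d)$ labelled by $\mu$ (which is zero unless $\ell(\mu) \leq d$, automatically satisfied in the regime of interest). Combining: $\cC_{\lambda} = \dim \BC^{d_A} \cdot \sum_{\mu \subset \lambda : \mu \searrow \lambda^{-}} \dim \mathrm{Hom}_{S_N}(\BY_\mu, (\BC^{d_B})^{\otimes N}) = d \sum_{\mu \subset \lambda : \mu \searrow \lambda^{-}} \dim \BU^{d}_{\mu}$, which is the claimed formula.

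I expect the main obstacle — or rather the only place requiring genuine care — to be justifying that $\dim \mathrm{Hom}_{S_{N}}(\BY_{\lambda}, \BC^{d_A}\otimes(\BC^{d_B})^{\otimes N})$ really does measure the size of the SDP variable, i.e., that every feasible $\rho_{\lambda}$ arises as a convex combination of the rank-one-type blocks built from exchangeable Kraus operators $K_{\lambda,\alpha}$ in Eq.\eqref{eq:rhoYoungDiagram-KrausOperators}, with $\sum_{p_\lambda}\ket{p_\lambda}\otimes\ket{p_\lambda}$ as the fixed seed. This is the content of the paragraph preceding the proposition, so in the proof I would simply invoke it; the genuinely new content of the proof is the branching-plus-Schur–Weyl dimension count, and that part is routine once the index set $\{\mu : \mu \searrow \lambda^{-},\ \mu \subset \lambda\}$ is in hand. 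One subtlety to flag: one should check $\ell(\mu) \le d$ is not an extra constraint that kills terms — but since the sum is over $\mu$ appearing in $(\BC^{d})^{\otimes N}$ already, terms with $\ell(\mu) > d$ simply do not appear (or have $\dim \BU^d_\mu = 0$), so the formula is consistent either way.
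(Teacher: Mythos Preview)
Your proposal is correct and follows exactly the paper's approach: the paper's proof is precisely the chain of isomorphisms displayed immediately before the proposition (factor out $\BC^{d_A}$, decompose $\BY_\lambda$ as an $S_N$-module via the $\cE\cE^\dagger$ structure into $\bigoplus_{\mu \subset \lambda:\,\mu\searrow\lambda^-}\BY_\mu$, then apply Schur--Weyl duality), and the proposition just records the resulting dimension count. Your brief hesitation about which branching is in play is resolved correctly --- it is not the full restriction $S_{N+k-1}\downarrow S_N$ but the $(1^{k-1})$-isotypic piece under $S_{k-1}\times S_N$, which is exactly what the displayed decomposition you cite encodes.
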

The fact that complexity depends on the number of linear independent Kraus operators, may be legitimated as follows: (i) For free constraint case, the SDP variables consists of positive semidefinite $D \times D$ matrices, and the number of convex linear independent SDP variables, is $D^2$; (ii) The set of positive semidefinite $D \times D$ matrices admits a matrix basis (non-orthogonal under Hilbert-Schmidt inner product) with rank-$1$ positive semidefinite $D \times D$ matrices, through $\mathbb{R}$-linear span, i.e.,
\begin{align*}
\BM_{D \times D}(\BC)=\mr{Span}_{\mathbb{R}} \{ h_{i} \geq 0 | i = 1 , \ldots , D^2 \}.
\end{align*}
(iii) These basis matrices can be generated by Kraus operators, e.g., if $h_{i}=\ketbra{v_{i}}{v_{i}}$ then $h_{i}=\ketbra{v_{i}}{1} \cdot \ketbra{1}{1} \cdot \ketbra{1}{v_{i}}$. The number of independent Kraus operators is $D^2$ over $\mathbb{R}$, or $D$ over $\BC$.

\medskip

Let us describe how $\mu \vdash N$ is obtained from $\lambda \vdash (N+k-1)$ via the Littlewood-Richardson rule. The $\lambda$ should be one of Young diagrams obtained from appending $(1^{k-1})$ to some $\mu$, probably expressed as $(1^{k-1}) \otimes \mu \mapsto \lambda$. The allowable $\mu$ may be not unique. Let $\mu^{(i)}$ be one of allowable Young diagrams where $1 \leq i \leq k$. Then we characterize all these $\mu^{(i)}$:
\begin{align}
&
\text{for } i=1:
\quad
\mu^{(1)}=(\lambda_1, \lambda_2-1 , \ldots , \lambda_k-1),
\nonumber \\
&
\text{for } 1<i<k:
\quad
\mu^{(i)}=(\lambda_1-1, \lambda_2-1 , \ldots , \lambda_i , \ldots , \lambda_k-1),
\: \text{ if }
\lambda_{i-1} - 1 \geq \lambda_i \geq \lambda_{i+1} -1,
\nonumber \\
&
\text{for } i=k:
\quad
\mu^{(k)}=(\lambda_1-1, \lambda_2-1 , \ldots , \lambda_k),
\: \text{ if }
\lambda_{k-1} -1 \geq \lambda_{k}.
\label{eq:Procedure1st-lambda-mu}
\end{align}
The admissible condition $\lambda_{i-1} - 1 \geq \lambda_i \geq \lambda_{i+1} -1$ ensures that $\mu^{(i)}$ is a normal shape Young diagram.
For a given $\lambda$ with $\ell(\lambda) = k$, there exists at most $k$ allowable $\mu$.
Alternatively, $\mu$ can be obtained from $\lambda$ by the following equivalent procedure:
\begin{align}
\mu^{(i)}=\underbrace{(\lambda_1-1, \lambda_2-1 , \ldots , \lambda_i-1 , \ldots , \lambda_k-1)}_{=\lambda^{-}} + (0 , 0 , \ldots , \underset{\underset{\text{$i$-th slot}}{\uparrow}}{1} , \ldots , 0).
\label{eq:Procedure2nd-lambda-mu}
\end{align}
That is, $\mu^{(i)}$ is obtained from $\lambda^{-}$ by appending exactly one box to the admissible position meanwhile $\mu^{(i)}$ satisfies $\mu^{(i)} \subset \lambda$.

We illustrate above procedures by the example of $N=4, k=3$ on given $\lambda=(3,2,1)$ in Appendix \ref{sec:Example-recoupling}.

\medskip

At the end of this part, we present a more explicit form of Bosonic permutation symmetric state $\rho_{\lambda}$. We employ Schur basis \cite{bacon2005Schur,chen2025srkbp} to express the formulation. The space of Bob's copies (after $k$-purification) is $( \BC^{kd} )^{\otimes N} \cong ( \BC^{k} )^{\otimes N} \otimes ( \BC^{d} )^{\otimes N}$. We have dealt with $( \BC^{k} )^{\otimes N}$ part under $\U(k)$-symmetry by employing Schur basis, now we turn to $( \BC^{d} )^{\otimes N}$ part via similar manner $( \BC^{d} )^{\otimes N}=\bigoplus_{\mu \vdash N} \BY_{\mu} \otimes \BU_{\mu}$. We then adopt $p_{\mu}$ to label the standard Young tableaux of $\mu$. Likewise, adopt $q_{\mu}$ to label the semi-standard Young tableaux of $\mu$. That is to say, $p_{\mu}$ and $q_{\mu}$ are assigned to label the elements of $\BY_{\mu}$ and of $\BU_{\mu}$, respectively. The Bosonic permutation symmetry admits the following reduction.
\begin{proposition}[Bosonic permutation symmetric state] \label{proposition:main-form-PermInvStates}
The $\rho_{\lambda}$ defined in Eq.\eqref{eq:Uk-invariant-rho} for the SDP in Definition~\ref{definition:ReducedSDP-Uk}, can has the following form:
\begin{align}
&
\rho_{\lambda}
=
\frac{\id_{\mathbb{U}^{\lambda}_{k}}}{\dim \mathbb{U}^{\lambda}_{k}}\otimes
\sum_{\alpha}
\sum_{\mu, \nu  \searrow \lambda^{-}}
\ketbra{\Phi_{\mu}}{\Phi_{\nu}} \otimes \ketbra{\Psi_{\mu}^{\alpha} }{\Psi_{\nu}^{\alpha} },
\\
&
\text{where }
\sum_{\alpha}\sum_{\mu \searrow \lambda^{-}} \inner{\Psi^{\alpha}_{\mu}}{\Psi^{\alpha}_{\mu}}=1,
\: \text{ and }
\ket{\Phi_{\mu}}=\sum_{p_{\mu}=1}^{\dim \mathbb{Y}_{\mu}} \sqrt{ \frac{1}{\dim \mathbb{Y}_{\mu}} } \ket{p_{\mu}} \otimes \ket{p_{\mu}},
\end{align}
where $\mu \vdash N$ and $\ket{\Psi^{\alpha}_{\mu}} \in \mathbb{C}^{d} \otimes \mathbb{U}^{d}_{\mu}$.
\end{proposition}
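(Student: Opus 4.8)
The plan is to start from the $\U(k)$-reduced expression Eq.~\eqref{eq:Uk-invariant-rho} (equivalently its Kraus form Eq.~\eqref{eq:rhoYoungDiagram-KrausOperators}), where the factor $\id_{\BU^k_\lambda}/\dim\BU^k_\lambda$ carrying the $\U(k)$-action has already been isolated, and to analyze the remaining multiplicity operator $\hat\rho_\lambda\ge 0$ with $\tr\hat\rho_\lambda=1$, which lives on $(\cE\cE^\dagger\BY_\lambda)\otimes\BC^{d_A}\otimes(\BC^{d_B})^{\otimes N}$; here $\cE\cE^\dagger=\pr_{(1^{k-1})}$ projects the $S_{N+k-1}$-Specht module $\BY_\lambda$ onto its $(1^{k-1})$-isotypic part under the $S_{k-1}$ that permutes Alice's $k-1$ dualized slots. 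I would then decompose Bob's original copies by Schur--Weyl duality, $(\BC^{d_B})^{\otimes N}=\bigoplus_{\nu\vdash N}\BU^d_\nu\otimes\BY_\nu$, and use the Bose permutation symmetry $\rho_\lambda=\Delta(\pi)\rho_\lambda=\rho_\lambda\Delta(\pi)$ for $\pi\in S_N$, observing that $\Delta(\pi)$ is trivial on $\BU^k_\lambda$, on each $\BU^d_\nu$, and on Alice's original $\BC^{d_A}$, and acts by the $S_N$-representation on the two copies of the Specht modules ($\cE\cE^\dagger\BY_\lambda$ on the auxiliary side, the $\BY_\nu$'s on the original side).

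The heart of the argument is two branching facts, which I would prove first. (i) By Pieri's rule the Littlewood--Richardson coefficient $c^\lambda_{(1^{k-1}),\mu}$ is $1$ when $\lambda/\mu$ is a vertical strip of $k-1$ boxes and $0$ otherwise; using $\ell(\lambda)=k$ and $\lambda=\lambda^-+(1^{k})$ this is exactly the condition ``$\mu\searrow\lambda^-$ with $\mu\subseteq\lambda$'', each such $\mu$ occurring with multiplicity one, so $\cE\cE^\dagger\BY_\lambda\cong\bigoplus_{\mu\searrow\lambda^-}\BY_\mu$ as $S_N$-modules --- precisely the index set of Proposition~\ref{proposition:SDPComplexity-lambda}. (ii) For $\mu,\nu\vdash N$ the $S_N$-invariants in $\BY_\mu\otimes\BY_\nu$ form a $\delta_{\mu\nu}$-dimensional space; for $\mu=\nu$ it is spanned by the canonical intertwiner, which in the standard-tableau basis is the normalized maximally entangled vector $\ket{\Phi_\mu}=\sum_{p_\mu}(\dim\BY_\mu)^{-1/2}\ket{p_\mu}\otimes\ket{p_\mu}$ of the statement. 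Combining (i) and (ii) --- this is the Cauchy/Howe-duality decomposition $\mathrm{Sym}^N(\BC^k\otimes\BC^d)\cong\bigoplus_\mu\BU^k_\mu\otimes\BU^d_\mu$, restricted to the $\U(k)$-block labelled by $\lambda$ --- shows that the Bose-symmetric subspace of $(\cE\cE^\dagger\BY_\lambda)\otimes\BC^{d_A}\otimes(\BC^{d_B})^{\otimes N}$ equals $\BC^{d_A}\otimes\bigoplus_{\mu\searrow\lambda^-}(\BU^d_\mu\otimes\BC\ket{\Phi_\mu})$, whence $\hat\rho_\lambda$ is supported there.

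It then remains to spectrally decompose: write $\hat\rho_\lambda=\sum_\alpha\ketbra{w_\alpha}{w_\alpha}$ with each $\ket{w_\alpha}$ in that subspace, so $\ket{w_\alpha}=\sum_{\mu\searrow\lambda^-}\ket{\Psi^\alpha_\mu}\otimes\ket{\Phi_\mu}$ with $\ket{\Psi^\alpha_\mu}\in\BC^d\otimes\BU^d_\mu$. Expanding and using $\inner{\Phi_\mu}{\Phi_\nu}=\delta_{\mu\nu}$ gives simultaneously the bilinear form $\hat\rho_\lambda=\sum_\alpha\sum_{\mu,\nu\searrow\lambda^-}\ketbra{\Psi^\alpha_\mu}{\Psi^\alpha_\nu}\otimes\ketbra{\Phi_\mu}{\Phi_\nu}$ and the normalization $\sum_\alpha\sum_{\mu\searrow\lambda^-}\inner{\Psi^\alpha_\mu}{\Psi^\alpha_\mu}=\tr\hat\rho_\lambda=1$; restoring the $\id_{\BU^k_\lambda}/\dim\BU^k_\lambda$ prefactor and reordering tensor legs yields the displayed formula. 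The reverse direction --- that every expression of this shape is a feasible $\rho_\lambda$ --- is immediate, since such an operator is manifestly positive semidefinite, of unit trace, $\U(k)$-invariant, supported on the $\lambda$-block, and Bose-symmetric (each $\ket{\Phi_\mu}$ lying in the symmetric subspace).

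I expect the main obstacle to be steps (i)--(ii): one must check carefully that the $\cE\cE^\dagger$ projection on Alice's side meshes with the $S_N$-grading from Schur--Weyl so that exactly the ``add one box to $\lambda^-$'' diagrams survive (in particular none with $\ell(\mu)>k$, which is where $\ell(\lambda)=k$ enters; those with $\ell(\mu)>d$ drop out because $\BU^d_\mu=0$), and that the $S_N$-invariant of $\BY_\mu\otimes\BY_\mu$ is exactly the normalized diagonal vector $\ket{\Phi_\mu}$ rather than some other intertwiner. The remaining ingredients --- Schur's lemma to isolate $\id_{\BU^k_\lambda}$, the spectral/Kraus decomposition, and the trace bookkeeping --- are routine.
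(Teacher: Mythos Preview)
Your proposal is correct and lands on the same structure as the paper, but the route differs in execution. The paper's proof is more computational: it spectrally decomposes $\rho_\lambda$, writes each eigenvector in the Schur basis of $(\BC^{d})^{\otimes N}$, then applies the $S_N$-averaging projector $\frac{1}{N!}\sum_\pi\Delta(\pi)$ and invokes the Schur orthogonality relation for matrix coefficients,
\[
\frac{1}{N!}\sum_{\pi\in S_N}\pi_{r_\nu p_\nu}\,\pi_{p''_\mu p'_\mu}=\frac{1}{\dim\BY_\mu}\delta_{\nu\mu}\delta_{r_\mu p''_\mu}\delta_{p_\mu p'_\mu},
\]
to produce the diagonal vector $\ket{\Phi_\mu}$ directly. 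The branching of $\cE\cE^\dagger\BY_\lambda$ into $\bigoplus_{\mu\searrow\lambda^-}\BY_\mu$ is left implicit in that proof (it is discussed separately around Proposition~\ref{proposition:SDPComplexity-lambda}). Your approach instead first identifies the Bose-invariant subspace abstractly via the two branching facts (i) and (ii), and only then spectrally decomposes; this is the same representation theory, packaged structurally rather than computed entrywise. What your route buys is that the index set $\{\mu:\mu\searrow\lambda^-,\ \mu\subseteq\lambda\}$ and the form of $\ket{\Phi_\mu}$ emerge from Pieri and Schur's lemma without any explicit averaging, and you get the converse direction for free; what the paper's route buys is brevity once the Schur basis is in hand. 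Both are valid and essentially equivalent.
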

\begin{proof}
Firstly, $\rho_{\lambda}$ admits spectral decomposition $\rho_{\lambda}=\sum_{\alpha} \frac{\id_{\mathbb{U}^{\lambda}_{k}}}{\dim \mathbb{U}^{\lambda}_{k}}\otimes \ketbra{\psi_{\alpha}}{\psi_{\alpha}}$ with
\begin{align}
\ket{\psi_{\alpha}}=
\sum_{ p_{\lambda}=1 }^{\dim \BY_{\lambda}}
\sum_{\mu \vdash N}
\sum_{ p'_{\mu}=1 }^{\dim \BY_{\mu}}
\sum_{ q_{\mu}=1}^{\dim \BU^{d}_{\mu}} \sum_{i=1}^{d}
\psi^{\alpha}_{p_{\lambda} , i p'_{\mu} q_{\mu}} \ket{p_{\lambda}} \otimes (\ket{i} \otimes \ket{ p'_{\mu} , q_{\mu} })
\end{align}
where $\ket{ p'_{\mu} , q_{\mu} }$ is the Schur basis for $(\mathbb{C}^{d})^{\otimes N}$ with respect to Young diagram $\mu \vdash N$, and $\{ \ket{i} \}_{i=1}^{d}$ the basis for Alice's $\mathbb{C}^{d}$.
We construct Bosonic permutation symmetric vector by averaging,
\begin{align}
\cT_{S_{N}}(\ket{\psi_{\alpha}})=
\frac{1}{N!} \sum_{\pi \in S_{N}} \Delta(\pi) \ket{\psi_{\alpha}},
\end{align}
the statement is then proved by using orthogonality
\begin{align}
\frac{1}{N!}\sum_{\pi \in S_{N}} \pi_{ r_{\nu} p_{\nu} } \pi_{ p''_{\mu} p'_{\mu} }=\frac{1}{\dim \BY_{\mu}} \delta_{\nu \mu} \delta_{r_{\mu} p''_{\mu}} \delta_{p_{\mu} , p'_{\mu}}
\end{align}
where $\pi_{ p''_{\mu} p'_{\mu} } : S_{N} \to \BC$ is the matrix element of the irreducible representation of $S_{N}$ corresponding to $\mu$ and it is always possible to choose all $\pi_{ p''_{\mu} p'_{\mu} }$ to be real numbers.
\end{proof}




\subsection{Complexity underlying rectangular shape of diagrams}

Now we restrict ourselves to the rectangular shape scheme.
Proposition~\ref{proposition:SDPComplexity-lambda} indicates the characterization for the SDP complexity via the dimension of permutation-invariant Kraus operators. Having shown that the hierarchical SDPs for testing $k$-block-positivity can be established on rectangular shape Young diagrams, we will present the explicit formula for the SDP complexity.
Furthermore, this formula explains why extendibility hierarchy would collapse in the $k=d$ case. 

Let us restate the main theorem in Introduction.
\begin{theorem}[The SDP complexity based on rectangular scheme] \label{thm:kBP-SDPComplexity}
Setting $N$ by $N+k-1=k n$ with integer $n$. The SDP complexity of the reduced SDP Definition~\ref{definition:ReducedSDP-Uk} is
\begin{align}
\cC_{(n^{k})}
=d \frac{k(d+n-1)}{k+n-1} \prod_{r=1}^{k} \frac{(d+n-r-1)!(k-r)!}{(k+n-r-1)!(d-r)!}.
\label{eq:thm:SDPComplexity}
\end{align}
\end{theorem}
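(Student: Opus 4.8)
The plan is to apply Proposition~\ref{proposition:SDPComplexity-lambda} to the rectangular diagram $\lambda = (n^k)$ and then evaluate the resulting sum explicitly. The proposition gives $\cC_{(n^k)} = d \sum_{\mu \subset (n^k)\,:\,\mu \searrow ((n-1)^k)} \dim \BU^d_\mu$. The first step is to identify which $\mu$ appear. Since $\lambda^- = ((n-1)^k)$ and $\mu$ is obtained by adding a single box to $\lambda^-$ while remaining inside $(n^k)$ (and having at most $k$ rows, which is automatic here since $\ell(\lambda^-) = k$ already), the only admissible addition is to the first row, giving the unique diagram $\mu = (n, (n-1)^{k-1})$. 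So the sum collapses to a single term: $\cC_{(n^k)} = d \cdot \dim \BU^d_{(n,(n-1)^{k-1})}$.

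The main computation is then to evaluate $\dim \BU^d_{(n,(n-1)^{k-1})}$ via the Weyl dimension formula for $\U(d)$, namely $\dim \BU^d_\nu = \prod_{1 \le i < j \le d} \frac{\nu_i - \nu_j + j - i}{j - i}$ with $\nu$ padded by zeros to length $d$. Here $\nu = (n, (n-1)^{k-1}, 0^{d-k})$. I would split the product over pairs $(i,j)$ into three blocks according to whether $i,j$ both lie in $\{1,\dots,k\}$, $i \in \{1,\dots,k\}$ and $j \in \{k+1,\dots,d\}$, or both in $\{k+1,\dots,d\}$. The last block contributes $1$. Within the first block, one further separates the pair $(1, j)$ for $j \le k$ (where $\nu_1 - \nu_j = 1$) from the pairs $(i,j)$ with $2 \le i < j \le k$ (where $\nu_i = \nu_j$), the latter contributing $\prod_{2 \le i < j \le k}\frac{j-i}{j-i} = 1$. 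The cross block with $i=1$ gives $\prod_{j=k+1}^d \frac{n + j - 1}{j-1}$ and with $2 \le i \le k$ gives $\prod_{i=2}^k \prod_{j=k+1}^d \frac{(n-1) + j - i}{j-i}$. Assembling these telescoping/factorial products and comparing with the stated closed form $d\,\frac{k(d+n-1)}{k+n-1}\prod_{r=1}^k \frac{(d+n-r-1)!(k-r)!}{(k+n-r-1)!(d-r)!}$ is the bulk of the work; it is a matter of rewriting ratios of consecutive integers as ratios of factorials with the reindexing $r = i$ or $r = $ an appropriate shift, being careful with the isolated factor coming from the $(1,j)$, $j\le k$ pairs which produces the $\prod_{j=2}^k \frac{j}{j-1}\cdot(\text{stuff})$ type term that yields the leading $k$ and the $(k+n-1)^{-1}$ correction.

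Alternatively — and this is probably the cleaner route to present — I would use the hook content formula $\dim \BU^d_\nu = \prod_{(i,j) \in \nu} \frac{d + j - i}{h(i,j)}$, where the product is over cells of $\nu$, $j-i$ is the content, and $h(i,j)$ is the hook length. For $\nu = (n,(n-1)^{k-1})$ this is a near-rectangle: it is the $k \times (n-1)$ rectangle with one extra cell appended at position $(1,n)$. The hook lengths and contents of the $k\times(n-1)$ rectangle give a clean factorial expression, and the single extra cell at $(1,n)$ contributes content $n-1$ and hook length $1$ (it is at the end of row 1, and rows $2,\dots,k$ are shorter so it sees no cells below), i.e. a factor $d + n - 1$; the appended cell also does not alter the hooks of the existing cells since nothing sits below it. Then $\dim \BU^d_{(n,(n-1)^{k-1})} = (d+n-1)\cdot \dim \BU^d_{((n-1)^k)}$, and $\dim \BU^d_{((n-1)^k)}$ for a $k \times (n-1)$ rectangle has the standard product formula $\prod_{r=1}^k \frac{(d+n-1-r)!\,(k-r)!}{(k+n-1-r)!\,(d-r)!} \cdot (\text{the remaining hook/content bookkeeping})$ — here I need to double check against the hook-length computation of $\dim \BY_{((n-1)^k)}$ already given in the excerpt, so the two are consistent. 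Finally multiply by the overall $d$ from Proposition~\ref{proposition:SDPComplexity-lambda}.

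The main obstacle is purely bookkeeping: matching the hook-content (or Weyl) product to the precise factorial form in \eqref{eq:thm:SDPComplexity}, in particular accounting correctly for the discrepancy factor $\frac{k(d+n-1)}{k+n-1}$ — the $(d+n-1)$ comes from the appended corner cell, and the $\frac{k}{k+n-1}$ is the ratio $\dim\BU^d_{(n,(n-1)^{k-1})}/\big[(d+n-1)\dim\BU^d_{((n-1)^k)}\big]$ versus the naive rectangle formula, so I must be scrupulous about whether my rectangle formula already includes it or not. A secondary point worth a sentence is confirming that $\mu = (n,(n-1)^{k-1})$ is genuinely the \emph{only} admissible $\mu$ (using $\mu \subset \lambda$ and $\ell(\mu) \le k$), so that the sum in Proposition~\ref{proposition:SDPComplexity-lambda} has exactly one term; this was already noted in the proof of Proposition~\ref{pro:SDPs-rectangularYoung}, so I would just cite it.
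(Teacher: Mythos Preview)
Your approach is the paper's: invoke Proposition~\ref{proposition:SDPComplexity-lambda}, note (as in Proposition~\ref{pro:SDPs-rectangularYoung}) that the unique admissible $\mu$ is $(n,(n-1)^{k-1})$, and compute $\dim\BU^d_\mu$ via the hook content formula. The paper simply states the outcome of that computation.

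One concrete error in your hook-content route: the claim that ``the appended cell also does not alter the hooks of the existing cells'' is false. Adding the box at $(1,n)$ increases by one the arm length, hence the hook, of every cell $(1,j)$ with $1\le j\le n-1$ (those cells now have one more box to their right). This row-$1$ hook change is precisely the source of the factor $\tfrac{k}{k+n-1}$ you were uncertain about: the ratio of old to new row-$1$ hook products is $\prod_{j=1}^{n-1}\frac{n+k-j-1}{n+k-j}=\frac{k}{k+n-1}$, so in fact
\[
\dim\BU^d_{(n,(n-1)^{k-1})}=\frac{k(d+n-1)}{k+n-1}\,\dim\BU^d_{((n-1)^k)},
\]
not $(d+n-1)\dim\BU^d_{((n-1)^k)}$. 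With this correction, the rectangle dimension $\dim\BU^d_{((n-1)^k)}=\prod_{r=1}^{k}\frac{(d+n-r-1)!\,(k-r)!}{(k+n-r-1)!\,(d-r)!}$ (row-by-row in the hook content product) yields \eqref{eq:thm:SDPComplexity} immediately after multiplying by the overall $d$.
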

\begin{proof}
As mentioned earlier, $(n^{k})$ admits unique $\mu$ with $\ell(\mu) \leq k$, that is $\mu=(n, (n-1)^{k-1})$.
The calculation is then straightforward by using formula
\begin{align}
\dim \mathbb{U}_{\mu}^{k}=\prod_{(i,j) \in \mu} \frac{k+j-i}{h_{\mu}(i,j) }.
\label{eq:Dim-unitary-irrep}
\end{align}
to get the wanted dimension as follows,
\begin{align}
\dim \mathbb{U}^{d}_{(n,(n-1)^{k-1})}
=\frac{k(d+n-1)}{k+n-1} \prod_{r=1}^{k} \frac{(d+n-r-1)!(k-r)!}{(k+n-r-1)!(d-r)!}.
\end{align}
Combing this expression with Proposition~\ref{proposition:SDPComplexity-lambda}, the proof is complete.
\end{proof}

\medskip

Two corollaries can be directly read from Eq.\eqref{eq:thm:SDPComplexity} in Proposition~\ref{thm:kBP-SDPComplexity}.
\begin{corollary}[Hierarchy collapse] \label{corollary:SDPHierarchyCollapse}
The SDP complexity is independent on $n$ if $k=d$.
In this situation, we say the SDP hierarchy of $d$-block-positivity testing collapses.
\end{corollary}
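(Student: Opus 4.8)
The plan is to read off the $k=d$ case directly from the closed form in Equation~\eqref{eq:thm:SDPComplexity} and check that every factor involving $n$ cancels. First I would substitute $k=d$ into the product $\prod_{r=1}^{k} \frac{(d+n-r-1)!(k-r)!}{(k+n-r-1)!(d-r)!}$; since $k=d$, the factor $(d+n-r-1)!$ in the numerator equals the factor $(k+n-r-1)!$ in the denominator for every $r$, so each term of the product collapses to $\frac{(d-r)!}{(d-r)!}=1$, and hence the whole product equals $1$ regardless of $n$. Next I would look at the prefactor $d\,\frac{k(d+n-1)}{k+n-1}$: with $k=d$ the fraction $\frac{d+n-1}{k+n-1}$ is also identically $1$, so the prefactor reduces to $d\cdot k = d^2$. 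Therefore $\cC_{(n^{d})} = d^2$ for all $n$, which is manifestly independent of $n$.

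I would then spend a sentence interpreting this: since the SDP complexity $\cC_{(n^d)}=\dim \mathrm{Hom}_{S_N}(\BY_{(n^d)}, \BC^{d_A}\otimes(\BC^{d_B})^{\otimes N})$ does not grow as the hierarchy level $n$ increases, climbing the extendibility hierarchy yields no new information, consistent with the fact noted in the two open questions that for $k=d$ testing $d$-block-positivity only requires the minimal eigenvalue of $X$ and no extendibility hierarchy at all. Concretely, $\cC_{(n^d)}=d^2$ matches the size of a single unconstrained SDP variable on $\BC^{d_A}\otimes\BC^{d_B}\cong\BC^{d^2}$, i.e.\ exactly the resources needed to compute $\min_{\rho\in\mathrm{SN}_d(d,d)}\tr(X\rho)$, the minimal eigenvalue of $X$. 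I would also remark (optionally) that $\mu=(n,(n-1)^{d-1})$ has $\mu^- = \lambda^-$, so $\dim\BU^d_\mu$ is forced by the branching to be $d$-independent of $n$ in this degenerate regime.

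The only thing requiring care is bookkeeping with the factorials: one must check the $k=d$ substitution is legitimate termwise even when $d+n-r-1$ could be small, but since $r$ ranges only up to $k=d$ we have $d+n-r-1 \geq n-1 \geq 0$ for $n\geq 1$, so all the factorials are of nonnegative integers and the cancellation is valid with no edge cases. There is essentially no obstacle here — the corollary is an immediate algebraic consequence of Theorem~\ref{thm:kBP-SDPComplexity}, and the ``proof'' is just the observation that setting $k=d$ makes both the $n$-dependent quotient of factorials and the $n$-dependent linear factor telescope to $1$.
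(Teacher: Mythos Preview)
Your proof is correct and is exactly the approach the paper takes: the corollary is stated immediately after Theorem~\ref{thm:kBP-SDPComplexity} with the remark that it ``can be directly read from'' the closed formula, and your termwise cancellation upon setting $k=d$ is precisely that reading, including the conclusion $\cC_{(n^{d})}=d^{2}$.

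One small slip in your optional aside: the claim $\mu^{-}=\lambda^{-}$ is false, since $\mu^{-}=(n-1,(n-2)^{d-1})$ while $\lambda^{-}=((n-1)^{d})$; the paper's own alternative heuristic (the ``pictorial insight'' paragraph) instead observes that the complement $(n^{d})/\mu$ is $(1^{d-1})$ and invokes Lemma~\ref{lemma:DimUd-RectangleShape-Skew} to get $\dim\BU^{d}_{\mu}=\dim\BU^{d}_{(1^{d-1})}=d$ independently of $n$.
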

This corollary explains that the extendibility hierarchy collapses in the extreme situation $k=d$. Indeed, as already known, if $k=d$ then $k$-block-positivity testing can be easily done by looking whether $X$'s minimal eigenvalue is negative or not. The extendibility hierarchy is thus not needed.

\begin{corollary}[Relation of complexity for distinct $k$]
Let $1 \leq k' \leq k \leq d$, then
\begin{align}
\lim_{n \to \infty}
\frac{ \cC_{(n^{k})} }{ \cC_{(n^{k'})} }
=
\left\{
\begin{aligned}
&
\frac{k}{d-k}, \: &\text{ for } k'=d-k, \\
&
0, \: &\text{ for } k' > d - k, \\
&
\infty, \: &\text{ for } k' < d - k.
\end{aligned} \right.
\end{align}
\end{corollary}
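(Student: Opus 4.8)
The plan is to reduce this corollary to the explicit product formula for $\cC_{(n^k)}$ established in Theorem~\ref{thm:kBP-SDPComplexity} and then perform an asymptotic analysis of the ratio as $n \to \infty$. First I would observe that both $\cC_{(n^k)}$ and $\cC_{(n^{k'})}$ share a common prefactor $d$, which cancels in the quotient, so it suffices to analyze
\begin{align}
\frac{\cC_{(n^k)}}{\cC_{(n^{k'})}}
=
\frac{\dfrac{k(d+n-1)}{k+n-1} \prod_{r=1}^{k} \dfrac{(d+n-r-1)!(k-r)!}{(k+n-r-1)!(d-r)!}}
{\dfrac{k'(d+n-1)}{k'+n-1} \prod_{r=1}^{k'} \dfrac{(d+n-r-1)!(k'-r)!}{(k'+n-r-1)!(d-r)!}}.
\nonumber
\end{align}
The rational prefactors $\tfrac{k(d+n-1)}{k+n-1}$ and $\tfrac{k'(d+n-1)}{k'+n-1}$ each tend to a nonzero constant ($k$ and $k'$ respectively) as $n\to\infty$, so they contribute only a finite nonzero multiplicative factor and do not affect which of the three cases occurs; the decisive behavior comes entirely from the ratio of the two products.

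The key step is to estimate $\prod_{r=1}^{k}\frac{(d+n-r-1)!}{(k+n-r-1)!}$ (the $(k-r)!/(d-r)!$ factors are $n$-independent constants). Using $\frac{(d+n-r-1)!}{(k+n-r-1)!}=\prod_{j=k+n-r}^{d+n-r-1} j$, which is a product of $d-k$ consecutive integers each of size $n(1+o(1))$, one gets $\frac{(d+n-r-1)!}{(k+n-r-1)!}=n^{d-k}(1+o(1))$. Hence $\prod_{r=1}^{k}\frac{(d+n-r-1)!}{(k+n-r-1)!}=n^{k(d-k)}(1+o(1))\cdot C_k$ for an explicit $n$-independent constant $C_k=\prod_{r=1}^k \frac{(k-r)!}{(d-r)!}$, and similarly for $k'$ with exponent $k'(d-k')$. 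Therefore
\begin{align}
\frac{\cC_{(n^k)}}{\cC_{(n^{k'})}} = \Theta\!\left(n^{\,k(d-k)-k'(d-k')}\right).
\nonumber
\end{align}
The sign of the exponent $k(d-k)-k'(d-k')=(k-k')(d-k-k')$ controls the limit: since $k'\le k$ we have $k-k'\ge 0$, so the exponent is positive, zero, or negative according to whether $d-k-k'$ is positive ($k'<d-k$), zero ($k'=d-k$), or negative ($k'>d-k$), giving the limits $\infty$, a finite constant, and $0$ respectively.

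For the boundary case $k'=d-k$ I would compute the constant precisely rather than just up to $\Theta$. Here the powers of $n$ cancel exactly, so the limit equals $\lim_{n\to\infty} \tfrac{k(d+n-1)/(k+n-1)}{k'(d+n-1)/(k'+n-1)}$ times the ratio of the product parts with their $n\to\infty$ leading constants. The rational prefactor ratio tends to $k/k'=k/(d-k)$, while the product parts, after extracting the cancelling $n^{k(d-k)}=n^{k'(d-k')}$, contribute leading constants that must be checked to be equal; using the symmetry of $\dim \BU^d_\mu$ under the exchange $k\leftrightarrow d-k$ (equivalently, passing to the conjugate partition/complementary tableau, which swaps the roles of $k$ and $d-k$ in Eq.~\eqref{eq:Dim-unitary-irrep}) one sees the remaining constants match, leaving exactly $k/(d-k)$. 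The main obstacle is this last bookkeeping: tracking the $o(1)$ corrections carefully enough to be sure the non-$n^{d-k}$ lower-order terms in $\prod_{j=k+n-r}^{d+n-r-1}j$ genuinely wash out, and verifying that the residual constant in the $k'=d-k$ case is exactly $k/(d-k)$ and not some other ratio of factorials; both are routine but require attention, and the conjugate-partition symmetry is the cleanest way to see the constant without brute force.
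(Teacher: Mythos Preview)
Your approach is essentially identical to the paper's: both extract the leading power of $n$ from the product formula in Theorem~\ref{thm:kBP-SDPComplexity}, factor the exponent as $(k-k')(d-k-k')$, and read off the three cases from its sign. The only cosmetic difference is that the paper pulls out $(n-1)$ rather than $n$ from each linear factor $(j+n-1)=j(n-1)\bigl(\tfrac{1}{j}+\tfrac{1}{n-1}\bigr)$, which makes the residual constants explicit; in the critical case $k'=d-k$ the paper then simply asserts the limit is $k/k'$, whereas you correctly flag that one must still check the residual product $\prod_{r=1}^{k}(k-r)!/(d-r)!$ equals its $k'$-counterpart and propose the complementary-tableau symmetry (Lemma~\ref{lemma:DimUd-RectangleShape-Skew}) for this---a clean way to see the identity $\prod_{i=0}^{k-1} i!\big/\prod_{i=d-k}^{d-1} i!=\prod_{i=0}^{d-k-1} i!\big/\prod_{i=k}^{d-1} i!$ that the paper leaves implicit. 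One small slip: your sentence ``$\prod_{r=1}^{k}\frac{(d+n-r-1)!}{(k+n-r-1)!}=n^{k(d-k)}(1+o(1))\cdot C_k$'' misplaces $C_k$; the factorial ratio alone has leading coefficient $1$, and $C_k=\prod_{r}(k-r)!/(d-r)!$ comes from the separate $n$-independent factors you set aside just before. This does not affect the argument.
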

\begin{proof}
The expression
\begin{align}
\frac{ \cC_{(n^{k})} }{ \cC_{(n^{k'})} }
&=
\frac{k}{k'}\frac{k'+n-1}{k+n-1}
\frac{\prod_{r=1}^{k} (\frac{1}{d-r}+\frac{1}{n-1}) \times \cdots \times (\frac{1}{k-r+1}+\frac{1}{n-1}) (n-1)^{d-k}}
{\prod_{m=1}^{k'} (\frac{1}{d-r}+\frac{1}{n-1}) \times \cdots \times (\frac{1}{k'-m+1}+\frac{1}{n-1}) (n-1)^{d-k'}}
\\
&=
\frac{(n-1)^{k(d-k)}}{(n-1)^{k'(d-k')}}
\frac{k}{k'}\frac{k'+n-1}{k+n-1}
\frac{\prod_{r=1}^{k} (\frac{1}{d-r}+\frac{1}{n-1}) \times \cdots \times (\frac{1}{k-r+1}+\frac{1}{n-1})}
{\prod_{m=1}^{k'} (\frac{1}{d-r}+\frac{1}{n-1}) \times \cdots \times (\frac{1}{k'-m+1}+\frac{1}{n-1})}
\\
&=
(n-1)^{(k-k')(d-k-k')}
\frac{k}{k'}\frac{k'+n-1}{k+n-1}
\frac{\prod_{r=1}^{k} (\frac{1}{d-r}+\frac{1}{n-1}) \times \cdots \times (\frac{1}{k-r+1}+\frac{1}{n-1})}
{\prod_{m=1}^{k'} (\frac{1}{d-r}+\frac{1}{n-1}) \times \cdots \times (\frac{1}{k'-m+1}+\frac{1}{n-1})}.
\end{align}
As $n \to \infty$, the expression diverges unless $(k-k')(d-k-k') \leq 0$: if $d-k-k'<0$ then the limit vanishes while if $d-k-k'=0$ then the limit converges to $k/k'$.
\end{proof}
This corollary implies that testing $k$-block-positivity is as hard as testing $k'$-block-positivity in the sense of big O notation.

\paragraph{Pictorial insight}
The phenomenon of hierarchy collapse can be also pictorially read from a lemma of representation theory.
\begin{lemma} \label{lemma:DimUd-RectangleShape-Skew}
Suppose $\mu \subseteq (n^{d})$, then $\dim \BU^{d}_{\mu}=\dim \BU^{d}_{(n^{d}) / \mu}$.
\end{lemma}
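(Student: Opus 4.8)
The plan is to prove $\dim \BU^{d}_{\mu}=\dim \BU^{d}_{(n^{d})/\mu}$ by identifying the skew diagram $(n^{d})/\mu$ with a genuine straight-shape Young diagram (up to rotation) and then showing that this rotated diagram is the \emph{complement} of $\mu$ inside the $d\times n$ rectangle. Concretely, if $\mu=(\mu_1,\ldots,\mu_d)$ with $\mu_i\le n$, then rotating the skew shape $(n^{d})/\mu$ by $180^\circ$ produces the straight Young diagram $\mu^{c}:=(n-\mu_d,\,n-\mu_{d-1},\,\ldots,\,n-\mu_1)$, which has at most $d$ rows and each part at most $n$. So the content of the lemma is exactly the symmetry $\dim\BU^d_\mu=\dim\BU^d_{\mu^c}$ for the rectangular complement.

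First I would establish the combinatorial identification of $(n^d)/\mu$ with $\mu^c$ by a direct bijection on the boxes: the box in row $i$, column $j$ of the $d\times n$ rectangle lies in $(n^d)/\mu$ iff $j>\mu_i$, and sending $(i,j)\mapsto(d+1-i,\,n+1-j)$ maps these boxes bijectively onto the boxes of $\mu^c$. Since $\dim\BU^d_\sigma$ for a skew shape $\sigma$ counts semistandard Young tableaux of shape $\sigma$ with entries in $\{1,\ldots,d\}$, and a $180^\circ$ rotation combined with the entry relabelling $a\mapsto d+1-a$ turns a column-strict, weakly-row-increasing filling of $(n^d)/\mu$ into one of $\mu^c$ and vice versa, this bijection already gives $\dim\BU^d_{(n^d)/\mu}=\dim\BU^d_{\mu^c}$. (Alternatively one can invoke the skew Schur function identity $s_{(n^d)/\mu}(x_1,\ldots,x_d)=s_{\mu^c}(x_1,\ldots,x_d)$ in $d$ variables, which follows from the well-known rotational symmetry of skew Schur functions restricted to the rectangle, and then specialise all $x_i=1$.)

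Next I would prove the rectangular-complement symmetry $\dim\BU^d_\mu=\dim\BU^d_{\mu^c}$. The cleanest route is the Weyl dimension formula $\dim\BU^d_\mu=\prod_{1\le i<j\le d}\frac{\mu_i-\mu_j+j-i}{j-i}$: substituting $\mu^c_i=n-\mu_{d+1-i}$ one computes $\mu^c_i-\mu^c_j+j-i=\mu_{d+1-j}-\mu_{d+1-i}+j-i$, and as the unordered pair $\{i,j\}$ ranges over all pairs so does $\{d+1-i,d+1-j\}$, so the numerator product is unchanged; the denominator is manifestly the same. Hence the two dimensions agree. One may also note this is consistent with the hook-content formula $\dim\BU^d_\mu=\prod_{(i,j)\in\mu}\frac{d+j-i}{h_\mu(i,j)}$ used in Theorem~\ref{thm:kBP-SDPComplexity}, since the content $j-i$ of a box and its negative appear symmetrically under rectangular complementation together with the reshuffling of hook lengths.

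The main obstacle is purely bookkeeping: making the $180^\circ$-rotation bijection between $(n^d)/\mu$ and $\mu^c$ airtight, in particular checking that $\mu^c$ is a bona fide partition (weakly decreasing parts, at most $d$ rows, parts $\le n$) precisely when $\mu\subseteq(n^d)$, and that the rotation respects the semistandard (or row/column strictness) conditions in the right way. None of this is deep, but it must be stated carefully; once the identification $(n^d)/\mu\leftrightarrow\mu^c$ is in hand, the dimension equality is immediate from Weyl's formula. Finally I would remark how this gives the pictorial reading of the $k=d$ collapse: for $k=d$ the relevant $\mu=(n,(n-1)^{d-1})$ inside $(n^d)$ has complement $\mu^c=(1,0,\ldots,0)=(1)$ (a single box, up to the trailing empty rows which do not affect $\BU^d$), so $\dim\BU^d_\mu=\dim\BU^d_{(1)}=d$ independently of $n$, which is exactly Corollary~\ref{corollary:SDPHierarchyCollapse}.
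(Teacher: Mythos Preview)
The paper does not actually supply a proof of Lemma~\ref{lemma:DimUd-RectangleShape-Skew}; it is quoted as ``a lemma of representation theory'' and then used pictorially. Your two-step argument (identify $(n^d)/\mu$ with the rectangular complement $\mu^{c}=(n-\mu_d,\ldots,n-\mu_1)$ by a $180^\circ$ rotation plus the entry involution $a\mapsto d+1-a$ on semistandard tableaux, and then check $\dim\BU^d_{\mu}=\dim\BU^d_{\mu^{c}}$ via the Weyl dimension formula) is the standard route and is correct. An equivalent and slightly slicker phrasing of the second step is that $\BU^d_{\mu^{c}}\cong(\BU^d_{\mu})^{*}\otimes\det^{n}$ as $\U(d)$-representations, which immediately gives equality of dimensions.

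One small slip in your final remark: for $k=d$ and $\mu=(n,(n-1)^{d-1})$ the rectangular complement is
\[
\mu^{c}=\bigl(n-\mu_d,\ldots,n-\mu_2,n-\mu_1\bigr)=\bigl(1,\ldots,1,0\bigr)=(1^{d-1}),
\]
not $(1)$. This is exactly what the paper records (``its complementary is $(1^{d-1})$''). Your conclusion is unaffected because $\dim\BU^d_{(1^{d-1})}=\binom{d}{d-1}=d$ as well, but the shape you wrote down is wrong.
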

For rectangular Young diagram $(n^{k})$, Lemma~\ref{lemma:DimUd-RectangleShape-Skew} shows $\dim \BU^{d}_{(n,(n-1)^{k-1})}=\dim \BU^{d}_{(n^{d-k},1^{k-1})}$.
For instance, if $d=5, n=4$, then we add the Young diagram into a $5$-row rectangle,
\begin{align*}
  \begin{ytableau}
   \quad & \quad & \quad & \quad \\
   \quad & \quad & \quad & \bullet \\
   \quad & \quad & \quad & \bullet \\
   \bullet & \bullet & \bullet & \bullet \\
   \bullet & \bullet & \bullet & \bullet
  \end{ytableau}
\end{align*}
Lemma~\ref{lemma:DimUd-RectangleShape-Skew} tells that $\dim \BU^{5}_{(4,3^2)}=\dim \BU^{5}_{(4^2,1^2)}$.
Then Corollary~\ref{corollary:SDPHierarchyCollapse} has a pictorial explanation thanks for Lemma~\ref{lemma:DimUd-RectangleShape-Skew}: In the case $k=d$, its complementary is $(1^{d-1})$, showing that the corresponding dimension of irreducible $\U(d)$ is $d$.




\section{Conclusion}

In this paper, we have derived the SDP complexity of a semidefinite programming hierarchy for testing $k$-block-positivity, building upon the framework established in \cite{chen2025srkbp}. We showed that it is sufficient to implement the hierarchical SDPs using only rectangular Young diagrams. This approach opens the possibility of solving reduced SDPs involving a smaller set of irreducible representations, rather than running the full decomposition. Furthermore, we formulated permutational invariance in terms of exchangeable Kraus operators, instead of incorporating it as additional SDP constraints as in \cite{chen2025srkbp}. This allows the computational complexity to be quantified directly in terms of the dimensions of the irreducible representations of $\U(d)$, where $d$ denotes the dimension of the original local space prior to $k$-extension. Finally, we anticipate that these methodological results will provide useful insights into computational approaches for challenging optimization problems in quantum information theory, such as the 2-copy distillability conjecture.




\section*{Acknowledgements}

We are grateful to Omar Fawzi for valuable insights during the initial stages of the project, as well as to both Omar Fawzi and Kun Fang for their helpful feedback and discussions.
Q. C. is supported by grants NSFC 12341101, and the QuantERA II Programme that has received funding from the European Union’s Horizon 2020 research and innovation programme under Grant Agreement No 101017733 (VERIqTAS).
B. C. is supported by JSPS Grant-in-Aid Scientific Research (A) no.
25H00593, and Challenging Research (Exploratory) no. 23K17299.
B. C. was also supported by the Chaire Jean Morlet and acknowledges the hospitality of ENS Lyon during the fall 2024, which allowed us to initiate this project. 




\begin{appendices}


\section{The proof of Proposition~\ref{proposition:Bound-SNK-SN1-SNk}} \label{proof:proposition:Bound-SNK-SN1-SNk}

We present a proof for Proposition~\ref{proposition:Bound-SNK-SN1-SNk}.
\begin{proof}
Define map $\hat{\phi}_{k} : \BC^{k d_{A}} \otimes \BC^{k d_{B}} \to \BC^{d_{A}} \otimes \BC^{d_{B}}$ by setting $\hat{\phi}_{k}=\ket{\phi_{k}} \otimes \id_{d_{A}} \otimes \id_{d_{B}}$ that satisfies $\hat{\phi}_{k}^{\dagger} \hat{\phi}_{k}=k \id_{d_{A}} \otimes \id_{d_{B}}$. Then for any $\ket{\varphi} \in \mr{SR}_{1} (kd_{A} , kd_{B})$, we have $0 \leq \|{\hat{\phi}_{k}^{\dagger} \ket{\varphi}}\|_{2} \leq 1$. This can be shown by noting that every $\ket{\varphi} \in  \mr{SR}_{1} (kd_{A} , kd_{B})$ can be written as $\ket{\varphi}=\sum_{i \in I_{A}} \sum_{j \in J_{B}} \ket{i j } \otimes (x \otimes y) \ket{i , j}$ where $I_{A}$ and $J_{B}$ are index sets; $x : \BC^{k} \to \BC^{d_{A}}$ and $y : \BC^{k} \to \BC^{d_{B}}$ satisfy $\tr (x^{\dagger} x) \tr (y^{\dagger} y)=1$ due to the normalization $\inner{\varphi}{\varphi}=1$. Then since $\|{\hat{\phi}_{k}^{\dagger} \ket{\varphi}}\|_{2}^2=\tr(x^{\dagger} x y^{\tp} \bar{y})$ has the form of $\tr (A B^{\dagger})$, so $0 \leq \|{\hat{\phi}_{k}^{\dagger} \ket{\varphi}}\|_{2}^2 \leq 1$ where if $I_{A} \cap J_{B}=\emptyset$ then $\hat{\phi}_{k}^{\dagger} \ket{\varphi}=0$; if $I_{A}=J_{B}$ are singletons, then $\| \hat{\phi}_{k}^{\dagger} \ket{\varphi} \|_{2}=1$.

Set $\mr{SR}_{1} (kd_{A} , kd_{B})$ to contain either $\hat{\phi}_{k}^{\dagger} \ket{\varphi} \neq 0$ or $\hat{\phi}_{k}^{\dagger} \ket{\varphi'}=0$.
On the one hand, the inequality $\V_{k} \leq 0$ holds because if such $\ket{\varphi'}$ exists, hence if $\V \geq 0$, then $\V_{k}=0$. On the other hand, every nonzero $\ket{\psi} \in \mr{SR}_{k} (d_{A},d_{B})$ admits Schmidt a decomposition
\begin{align}
\ket{\psi}=\sum_{i=1}^{k} \ket{x_{i} \otimes y_{i}},
\: \text{ with }
\inner{x_{j}}{x_{i}}=\sqrt{s_{i}} \delta_{ji}
\: \text{ and }
\inner{y_{j}}{y_{i}}=\delta_{ji},
\end{align}
where the singular values $s_{i} \geq 0$ and satisfy $\sum_{i=1}^{k} s_{i}=1$, hence it can be purified into
\begin{align}
\mr{SR}_{1} (kd_{A} , kd_{B}) \ni
\ket{\varphi}=\frac{1}{\sqrt{\mr{sr}(\psi)}} \sum_{i,j=1}^{k} \cE \ket{i \otimes j} \otimes \ket{x_{i} \otimes y_{j}},
\end{align}
where $\mr{sr}(\psi)$ is the Schmidt rank of $\ket{\psi}$, the factor $1 / \sqrt{\mr{sr}(\psi)}$ is for normalization $\inner{\varphi}{\varphi}=1$. Let $\ket{\psi_{m}}$ be a minimizer of the optimization problem in Definition~\ref{definition:Optimization-k-block-positivity-original}. Then $\V=\scp{\psi_{m}}{X}{\psi_{m}}$. The above purification shows that $\V_{k} \leq \V / \mr{sr}(\psi_{m})$.
From now on, we focus on $\V<0$. Then, $k \V_{k} \leq \mr{sr}(\psi_{m}) \V_{k} \leq \V < 0$ where $k \geq \mr{sr}(\psi_{m})$. Let $\ket{\varphi''} \in \mr{SR}_{1} (kd_{A} , kd_{B})$ be the vector such that $\V_{k}=\scp{\varphi''}{X_{k}}{\varphi''}$. Then $\|{\hat{\phi}_{k}^{\dagger} \ket{\varphi''}}\|_{2} \neq 0$ is deduced from $\V_{k} < 0$. Using $0 < \|{\hat{\phi}_{k}^{\dagger} \ket{\varphi''}}\|_{2} \leq 1$ we obtain
\begin{align}
0 > \V_{k} = \scp{\varphi''}{k \Pi_{k} \otimes X}{\varphi''}
=
\|{\hat{\phi}_{k}^{\dagger} \ket{\varphi''}}\|_{2}^2 \scp{\psi''}{X}{\psi''}
\geq
\scp{\psi''}{X}{\psi''}
\geq
\V,
\nonumber
\end{align}
where $\mr{SR}_{k} (d_{A},d_{B}) \ni \ket{\psi''}= \hat{\phi}_{k}^{\dagger} \ket{\varphi''} / \|{\hat{\phi}_{k}^{\dagger} \ket{\varphi''}}\|_{2}$. Hence $0 > \V_{k} \geq \V \geq \mr{sr}(\psi_{m}) \V_{k} \geq k \V_{k}$.
\end{proof}




\section{Example of recoupling} \label{sec:Example-recoupling}

This section aims to illustrate the recoupling on given Young diagram, from the case of $k=3$ and $N=4$. To be specific, let us consider by starting with a given Young diagram, say, $\lambda=(3,2,1)$, then we ask which $\mu \vdash 4$ can yield $(3,2,1)$ via tensoring $(1^2)$ that represents $\cE \cE^{\dagger}$.
According to Eq.\eqref{eq:Procedure1st-lambda-mu}, the answer of the question is the following:
\begin{align*}
\begin{ytableau} \bullet & \quad & \quad \\ \bullet & \quad \\ \quad \end{ytableau}
\cong
\begin{ytableau} \quad \\ \quad \end{ytableau}
\otimes
\begin{ytableau} \quad & \quad & \quad \\ \quad \end{ytableau}
\oplus
\begin{ytableau} \quad \\ \quad \end{ytableau}
\otimes
\begin{ytableau} \quad & \quad \\ \quad & \quad \end{ytableau}
\oplus
\begin{ytableau} \quad \\ \quad \end{ytableau}
\otimes
\begin{ytableau} \quad & \quad \\ \quad \\ \quad \end{ytableau}.
\end{align*}
The consistent check on the dimension of symmetric group is done by $8=1 \times (4+2+4)$ where $8$ is the total number of the skew Young tableaux of $(3,2,1) / (1^2)$.

One can answer the question from the manner described in Eq.\eqref{eq:Procedure2nd-lambda-mu}, by appending $(1^{k-1})$ to $\lambda^{-}$. Write
\[
(3,2,1)=\begin{ytableau} \quad & \quad & \quad \\ \quad & \quad \\ \quad \end{ytableau}, \: \ (3,2,1)^{-}=(2,1)=\begin{ytableau} \quad & \quad \\ \quad \end{ytableau},
\]
the based on $\lambda^{-}=(2,1)$, an allowable $\mu$ is obtained by appending one box, which should be one of
\[
\begin{ytableau} \quad & \quad \\ \quad \\ \quad \end{ytableau}, \
\begin{ytableau} \quad & \quad & \quad \\ \quad \end{ytableau}, \
\begin{ytableau} \quad & \quad \\ \quad & \quad \end{ytableau}.
\]
Now one is able to check the tensor $\cE \cE^{\dagger}$ with $\mu$ where $\cE \cE^{\dagger}$ is represented by Young diagram $(1^2)$. This is done by using Littlewood-Richardson rule for decomposing $(1^2) \otimes \mu$ as follows,
\begin{align*}
&
\begin{ytableau} \quad \\ \quad \end{ytableau}
\otimes
\begin{ytableau} \quad & \quad \\ \quad \\ \quad \end{ytableau}
\cong
\begin{ytableau} \quad & \quad & \bullet \\ \quad & \bullet \\ \quad \end{ytableau}
\oplus
\begin{ytableau} \quad & \quad & \bullet \\ \quad \\ \quad \\ \bullet  \end{ytableau}
\oplus
\begin{ytableau} \quad & \quad \\ \quad & \bullet \\ \quad \\ \bullet \end{ytableau}
\oplus
\begin{ytableau} \quad & \quad \\ \quad \\ \quad \\ \bullet \\ \bullet \end{ytableau},
\\
&
\begin{ytableau} \quad \\ \quad \end{ytableau}
\otimes
\begin{ytableau} \quad & \quad \\ \quad & \quad \end{ytableau}
\cong
\begin{ytableau} \quad & \quad & \bullet \\ \quad & \quad \\ \bullet \end{ytableau}
\oplus
\begin{ytableau} \quad & \quad & \bullet \\ \quad & \quad & \bullet \end{ytableau}
\oplus
\begin{ytableau} \quad & \quad \\ \quad & \quad \\ \bullet \\ \bullet \end{ytableau},
\\
&
\begin{ytableau} \quad \\ \quad \end{ytableau}
\otimes
\begin{ytableau} \quad & \quad & \quad \\ \quad \end{ytableau}
\cong
\begin{ytableau} \quad & \quad & \quad \\ \quad & \bullet \\ \bullet \end{ytableau}
\oplus
\begin{ytableau} \quad & \quad & \quad & \bullet \\ \quad & \bullet \end{ytableau}
\oplus
\begin{ytableau} \quad & \quad & \quad & \bullet \\ \quad \\ \bullet \end{ytableau}
\oplus
\begin{ytableau} \quad & \quad & \quad \\ \quad \\ \bullet \\ \bullet \end{ytableau},
\end{align*}
where $\bullet$ stands for the boxes from $(1^2)$. It is clear that they all produce $(3,2,1)$, and no other $\mu \vdash 4$ produces $(3,2,1)$.

\end{appendices}







\bibliographystyle{alpha_abbrv}
\bibliography{references}

\begin{thebibliography}{CKMR07}

\bibitem[AZW17]{Alcock-Zeilinger:2016sxc}
J.~Alcock-Zeilinger and H.~Weigert.
\newblock {Compact Hermitian Young Projection Operators}.
\newblock {\em J. Math. Phys.}, 58(5):051702, 2017.

\bibitem[BCH05]{bacon2005Schur}
D.~Bacon, I.~L. Chuang, and A.~W. Harrow.
\newblock The quantum schur transform: I. efficient qudit circuits.
\newblock {\em arXiv preprint quant-ph/0601001}, 2005.

\bibitem[CCF25]{chen2025srkbp}
Q.~Chen, B.~Collins, and O.~Fawzi.
\newblock {Symmetry reduction for testing k-block-positivity via
  extendibility}.
\newblock {\em J. Phys. A}, 58(48):485302, 2025.

\bibitem[CFS02]{caves2002DeFinetti}
C.~M. Caves, C.~A. Fuchs, and R.~Schack.
\newblock Unknown quantum states: the quantum de finetti representation.
\newblock {\em Journal of Mathematical Physics}, 43(9):4537--4559, 2002.

\bibitem[CKMR07]{christandl2007DeFinetti}
M.~Christandl, R.~K{\"o}nig, G.~Mitchison, and R.~Renner.
\newblock One-and-a-half quantum de finetti theorems.
\newblock {\em Communications in mathematical physics}, 273(2):473--498, 2007.

\bibitem[Hay17]{Hayashi:2016gly}
M.~Hayashi.
\newblock {\em {Quantum Information Theory. Mathematical Foundation}}.
\newblock Graduate Texts in Physics. Springer, 2017.

\bibitem[HHT01]{Hayden:2000zqp}
P.~M. Hayden, M.~Horodecki, and B.~M. Terhal.
\newblock {The asymptotic entanglement cost of preparing a quantum state}.
\newblock {\em J. Phys. A}, 34(35):6891, 2001.

\bibitem[HRidZ22]{PRXQuantum.3.010101}
P.~Horodecki, L.~Rudnicki, and K.~\ifmmode~\dot{Z}\else \.{Z}\fi{}yczkowski.
\newblock Five open problems in quantum information theory.
\newblock {\em PRX Quantum}, 3:010101, Mar 2022.

\bibitem[JK10a]{johnston2010NQitI}
N.~Johnston and D.~W. Kribs.
\newblock A family of norms with applications in quantum information theory.
\newblock {\em Journal of Mathematical Physics}, 51(8), 2010.

\bibitem[JK10b]{johnston2010NQitII}
N.~Johnston and D.~W. Kribs.
\newblock A family of norms with applications in quantum information theory ii.
\newblock {\em arXiv preprint arXiv:1006.0898}, 2010.

\bibitem[Mac98]{Macdonald1998SymmetricFun}
I.~Macdonald.
\newblock {\em Symmetric Functions and Hall Polynomials}.
\newblock Oxford classic texts in the physical sciences. Clarendon Press, 1998.

\bibitem[SBL01]{PhysRevA.63.050301}
A.~Sanpera, D.~Bru\ss{}, and M.~Lewenstein.
\newblock Schmidt-number witnesses and bound entanglement.
\newblock {\em Phys. Rev. A}, 63:050301, Apr 2001.

\bibitem[TH00]{PhysRevA.61.040301}
B.~M. Terhal and P.~Horodecki.
\newblock Schmidt number for density matrices.
\newblock {\em Phys. Rev. A}, 61:040301, Mar 2000.

\bibitem[YKHL25]{Yamasaki:2024nbc}
H.~Yamasaki, K.~Kuroiwa, P.~Hayden, and L.~Lami.
\newblock {Entanglement Cost for Infinite-Dimensional Physical Systems}.
\newblock {\em Commun. Math. Phys.}, 406(11):277, 2025.

\end{thebibliography}

\end{document}